\documentclass[a4paper,USenglish,cleveref,autoref]{article} 
\usepackage[utf8]{inputenc}
\usepackage{amsmath,braket,mathrsfs,url,epsfig,amsfonts}
\usepackage{amssymb,fancyhdr}
\usepackage{latexsym}
\usepackage{amsthm}
\usepackage{hyperref}
\usepackage{cleveref}
\usepackage{babel}
\usepackage{babelbib}
\usepackage[all]{xy}
\usepackage[dvipsnames]{xcolor}
\usepackage{algorithm}
\usepackage{algorithmicx}
\usepackage{algpseudocode}
\usepackage{epstopdf}
\usepackage{lineno}
\usepackage{bbm}
\usepackage{mathtools}
\usepackage{sectsty} 
\usepackage{subcaption}
\usepackage{tikz}
\usepackage{tikz-cd}
\usetikzlibrary{cd}
\usetikzlibrary{patterns}
\usetikzlibrary{positioning}
\usetikzlibrary{arrows.meta}
\usepackage{booktabs}
\usepackage{todonotes}
\usepackage{mathtools}
\DeclarePairedDelimiter\ceil{\lceil}{\rceil}
\DeclarePairedDelimiter\floor{\lfloor}{\rfloor}

\definecolor{niceyellow}{rgb}{0.99,0.78,0.07}

\theoremstyle{definition}
\newtheorem{definition}{Definition}
\theoremstyle{definition}

\newtheorem{remark}[definition]{Remark}
\theoremstyle{plain}
\newtheorem{theorem}[definition]{Theorem}
\newtheorem{lemma}[definition]{Lemma}

\newtheorem{proposition}[definition]{Proposition}

\newcommand{\R}{\mathbb{R}}
\newcommand{\N}{\mathbb{N}}
\newcommand{\kmax}{K}

\newcommand{\firstcovername}{\mathfrak{X}}

\newcommand{\nerve}{\textnormal{Nrv}}
\newcommand{\covonly}{\textnormal{Cov}}
\newcommand{\cov}[2]{\covonly_{#1,#2}}
\newcommand{\covreg}[2]{\covonly_{#1}(#2)}
\newcommand{\bigcovonly}{\mathfrak{Cov}}
\newcommand{\bigcov}[2]{\bigcovonly_{#1,#2}}
\newcommand{\tcovonly}{\textnormal{S-}{\delonly}}
\newcommand{\tcov}[2]{\tcovonly_{#1,#2}}
\newcommand{\voronly}{\textnormal{Vor}}
\newcommand{\vor}[1]{\voronly_{#1}}
\newcommand{\vorcovonly}{\mathfrak{Vor}}
\newcommand{\vorcov}[2]{\vorcovonly_{#1,#2}}
\newcommand{\vorreg}[1]{\textnormal{Vor}(#1)}
\newcommand{\delonly}{\textnormal{Del}}

\newcommand{\del}[2]{\delonly_{#1,#2}}
\newcommand{\tdelonly}{\widetilde{\delonly}}
\newcommand{\tdel}[2]{\tdelonly_{#1,#2}}

\newcommand{\deltaspaceonly}{\Delta}
\newcommand{\deltaspace}[2]{\deltaspaceonly{#1}
}
\newcommand{\rhomonly}{\textnormal{{Rhomb}}}

\newcommand{\trhomonly}{\textnormal{S-}{\rhomonly}}
\newcommand{\trhom}[2]{\trhomonly_{#1,#2}}

\newcommand{\mm}[1]        {\ifmmode{#1}\else{\mbox{\(#1\)}}\fi}

\def \Xp{{A}}                     

\newcommand{\qv}           {\mm{\tilde{A}}} 
\newcommand{\Xin}          {\mm{{\Xp}_{\it in}}}
\newcommand{\Xon}          {\mm{{\Xp}_{\it on}}}
\newcommand{\Xout}         {\mm{{\Xp}_{\it out}}}
\newcommand{\Arr}[1]       {\mm{{Arr}{}}}


\newcommand{\Rhomboid}[2]  {\mm{{\rhomonly}_{#1,#2}}}

\newcommand{\vx}[1]        {\mm{{\rho}{_{#1}}}}  
\newcommand{\xin}[1]       {\mm{{\Xp}_{\it in}{({#1})}}}
\newcommand{\xon}[1]       {\mm{{\Xp}_{\it on}{({#1})}}}


\newcommand{\altkmin}{k_{\mathrm{min}}}
\newcommand{\altkmax}{k_{\mathrm{max}}}

\newcommand{\ignore}[1]{}


\bibliographystyle{plainurl}

\title{Computing the multicover bifiltration} 

\author{Ren\'{e} Corbet\thanks {KTH Royal Institute of Technology, corbet@kth.se},
Michael Kerber\thanks {Graz University of Technology, kerber@tugraz.at},
Michael Lesnick\thanks{University at Albany, SUNY, mlesnick@albany.edu},
Georg Osang\thanks{Institute of Science and Technology Austria, georg.osang@ist.ac.at}}

\begin{document}

\maketitle

\begin{abstract}
Given a finite set $A\subset\mathbb{R}^d$, let $\cov{r}{k}$ denote the set of all points within distance $r$ to at least $k$ points of $A$.  Allowing $r$ and $k$ to vary, we obtain a 2-parameter family of spaces that grow larger when $r$ increases or $k$ decreases, called the \emph{multicover bifiltration}.  Motivated by the problem of computing the homology of this  bifiltration, we introduce two closely related combinatorial bifiltrations, one polyhedral and the other simplicial, which are both topologically equivalent to the multicover bifiltration and far smaller than a \v Cech-based model considered in prior work of Sheehy.  Our polyhedral construction is a bifiltration of the \emph{rhomboid tiling} of Edelsbrunner and Osang, and can be efficiently computed using a variant of an algorithm given by these authors.  Using an implementation for dimension 2 and 3, we provide experimental results.  Our simplicial construction is useful for understanding the polyhedral construction and proving its correctness. 
\end{abstract}

\section{Introduction}\label{sec:introduction}

Let $A$ be a finite subset of $\R^d$, whose points we call \emph{sites}. For $r\in [0,\infty)$ and an integer $k\in \mathbb N=\{0,1,2,\ldots\}$, we define
\[
\cov{r}{k}:=\left\{  b\in\mathbb{R}^d \mid \textnormal{ } ||b-a||\leq r \textnormal{ for at least $k$ sites } a\in A  \right\}.
\]
Thus, $\cov{r}{k}$ is the union of all $k$-wise intersections of closed balls
of radius $r$ centered at the sites; see Figure~\ref{fig:kfold}.   
Define a \emph{bifiltration} to be   
 a collection of sets \[C:=(C_{r,k})_{(r,k)\in [0,\infty)\times\N}\] such that
$C_{r,k}\subseteq C_{r',k'}$ whenever $r\leq r'$ and $k\geq k'$.
Clearly, the sets \[\covonly:=(\cov{r}{k})_{(r,k)\in [0,\infty)\times\N}\] form a bifiltration.  This is known as the \emph{multicover bifiltration}.  It arises natural\-ly in topological data analysis (TDA), and specifically, in the topological analysis of data with outliers or non-uniform density \cite{Chazal2011,Edelsbrunner2018,Sheehy2012}.

We wish to study the topological structure of the bifiltration $\covonly$ algorithmi\-cally in practical applications, via 2-parameter persistent homology \cite{carlsson2009theory}.  For this, the natural first step is to compute a \emph{combinatorial model} of $\covonly$, that is, a purely combinatorial bifiltration $C$ which is topologically equivalent to $\covonly$.  This step is the focus of the present paper.  For computational efficiency, $C$ should not be too large. 

 In fact, we propose two closely related combinatorial models $C$, one polyhe\-dral and one simplicial.  The polyhedral model is a bifiltration of the \emph{rhomboid tiling}, a polyhedral cell complex in $\R^{d+1}$ recently introduced by Edelsbrunner and Osang to study the multicover bifiltration \cite{Edelsbrunner2018}.  Edelsbrunner and Osang have given an efficient algorithm for computing the rhomboid tiling~\cite{EdOs20}, and this adapts readily to compute our bifiltration.  We use the simplicial model to prove that the polyhedral model is topologically equivalent to $\covonly$.

\begin{figure}
\centering
\begin{tikzpicture}[scale=1.4]
	\tikzstyle{point}=[circle,thick,draw=black,fill=black,inner sep=0pt,minimum width=4pt,minimum height=4pt]
	\draw (-2.2, 2) rectangle (2, -1.6);
	\begin{scope}
		\clip ( 0.5, .3) circle (.8);
		\clip ( -1, 0) circle (.8);
		\fill[color=niceyellow] (-2,2)rectangle (2,-2);
	\end{scope}
	\begin{scope}
		\clip (-0.5, 0) circle (.8);
		\clip ( -1, 0) circle (.8);
		\fill[color=niceyellow] (-2,2)rectangle (2,-2);
	\end{scope}
	\begin{scope}
		\clip (-0.5, 0) circle (.8);
		\clip ( 0.5, .3) circle (.8);
		\fill[color=niceyellow] (-2,2)rectangle (2,-2);
	\end{scope}
	\begin{scope}
		\clip ( -1.2, .6) circle (.8);
		\clip ( -1, 0) circle (.8);
		\fill[color=niceyellow] (-2,2)rectangle (2,-2);
	\end{scope}
	\begin{scope}
		\clip (-0.5, 0) circle (.8);
		\clip ( -1.2, .6) circle (.8);
		\fill[color=niceyellow] (-2,2)rectangle (2,-2);
	\end{scope}
	\begin{scope}
		\clip ( -1, 0) circle (.8);
		\clip ( 0, 1.1) circle (.8);
		\fill[color=niceyellow] (-2,2)rectangle (2,-2);
	\end{scope}
	\begin{scope}
		\clip ( -1.2, .6) circle (.8);
		\clip ( 0, 1.1) circle (.8);
		\fill[color=niceyellow] (-2,2)rectangle (2,-2);
	\end{scope}
	\begin{scope}
		\clip (-0.5, 0) circle (.8);
		\clip ( 0, 1.1) circle (.8);
		\fill[color=niceyellow] (-2,2)rectangle (2,-2);
	\end{scope}
	\begin{scope}
		\clip ( 0.5, .3) circle (.8);
		\clip ( 0, 1.1) circle (.8);
		\fill[color=niceyellow] (-2,2)rectangle (2,-2);
	\end{scope}
	\begin{scope}
		\clip ( -1.2, .6) circle (.8);
		\clip  (-0.5, 0) circle (.8);
		\fill[color=niceyellow] (-2,2)rectangle (2,-2);
	\end{scope}
	\begin{scope}
		\clip  (-0.5, 0) circle (.8);
		\clip  ( -.7, -.6) circle (.8);
		\fill[color=niceyellow] (-2,2)rectangle (2,-2);
	\end{scope}
	\begin{scope}
		\clip ( -1, 0) circle (.8);
		\clip  ( -.7, -.6) circle (.8);
		\fill[color=niceyellow] (-2,2)rectangle (2,-2);
	\end{scope}
	\begin{scope}
		\clip    (-0.5, 0) circle (.8);
		\clip  ( .97, -.50) circle (.8);
		\fill[color=niceyellow] (-2,2)rectangle (2,-2);
	\end{scope}
	\begin{scope}
		\clip  ( 0.5, .3) circle (.8);
		\clip  ( .97, -.50) circle (.8);
		\fill[color=niceyellow] (-2,2)rectangle (2,-2);
	\end{scope}
	\draw (-0.5, 0) circle (.8);
	\node (a)[point] at (-.5,0) {};
	\draw ( 0.5, .3) circle (.8);
	\node (b)[point] at (.5,.3) {};
	\draw ( -1, 0) circle (.8);
	\node (c)[point] at (-1,0) {};
	\draw ( -1.2, .6) circle (.8);
	\node (c)[point] at ( -1.2, .6) {};
	\draw ( 0, 1.1) circle (.8);
	\node (c)[point] at (  0, 1.1) {};	
	\draw ( -.7, -.6) circle (.8);
	\node (c)[point] at (-.7, -.6) {};
	\draw ( .97, -.50) circle (.8);
	\node (c)[point] at ( .97, -.5) {};
\end{tikzpicture}
\begin{tikzpicture}[scale=1.4]
	\tikzstyle{point}=[circle,thick,draw=black,fill=black,inner sep=0pt,minimum width=4pt,minimum height=4pt]
	\draw (-2.2, 2) rectangle (2, -1.6);
	\begin{scope}
		\clip (-0.5, 0) circle (.8);
		\clip ( 0.5, .3) circle (.8);
		\clip ( -1, 0) circle (.8);
		\fill[color=niceyellow] (-2,2)rectangle (2,-2);
	\end{scope}
	\begin{scope}
		\clip (-0.5, 0) circle (.8);
		\clip ( -1.2, .6) circle (.8);
		\clip ( -1, 0) circle (.8);
		\fill[color=niceyellow] (-2,2)rectangle (2,-2);
	\end{scope}
	\begin{scope}
		\clip ( -1, 0) circle (.8);
		\clip ( -1.2, .6) circle (.8);
		\clip ( 0, 1.1) circle (.8);
		\fill[color=niceyellow] (-2,2)rectangle (2,-2);
	\end{scope}
	\begin{scope}
		\clip (-0.5, 0) circle (.8);
		\clip ( 0.5, .3) circle (.8);
		\clip ( 0, 1.1) circle (.8);
		\fill[color=niceyellow] (-2,2)rectangle (2,-2);
	\end{scope}
	\begin{scope}
		\clip ( 0, 1.1) circle (.8);
		\clip ( -1, 0) circle (.8);
		\clip  (-0.5, 0) circle (.8);
		\fill[color=niceyellow] (-2,2)rectangle (2,-2);
	\end{scope}
	\begin{scope}
		\clip ( 0, 1.1) circle (.8);
		\clip ( -1.2, .6) circle (.8);
		\clip  (-0.5, 0) circle (.8);
		\fill[color=niceyellow] (-2,2)rectangle (2,-2);
	\end{scope}
	\begin{scope}
		\clip  (-0.5, 0) circle (.8);
		\clip ( -1, 0) circle (.8);
		\clip  ( -.7, -.6) circle (.8);
		\fill[color=niceyellow] (-2,2)rectangle (2,-2);
	\end{scope}
	\begin{scope}
		\clip    (-0.5, 0) circle (.8);
		\clip  ( 0.5, .3) circle (.8);
		\clip  ( .97, -.50) circle (.8);
		\fill[color=niceyellow] (-2,2)rectangle (2,-2);
	\end{scope}
	\draw (-0.5, 0) circle (.8);
	\node (a)[point] at (-.5,0) {};
	\draw ( 0.5, .3) circle (.8);
	\node (b)[point] at (.5,.3) {};
	\draw ( -1, 0) circle (.8);
	\node (c)[point] at (-1,0) {};
	\draw ( -1.2, .6) circle (.8);
	\node (c)[point] at ( -1.2, .6) {};
	\draw ( 0, 1.1) circle (.8);
	\node (c)[point] at (  0, 1.1) {};	
	\draw ( -.7, -.6) circle (.8);
	\node (c)[point] at (-.7, -.6) {};
	\draw ( .97, -.50) circle (.8);
	\node (c)[point] at ( .97, -.5) {};
\end{tikzpicture}
\caption{The 2- and 3-fold cover of a few points with respect to a certain radius. The first homology of the 2-fold cover is trivial, while the first homology of the 3-fold cover is non-trivial.}
\label{fig:kfold}
\end{figure}
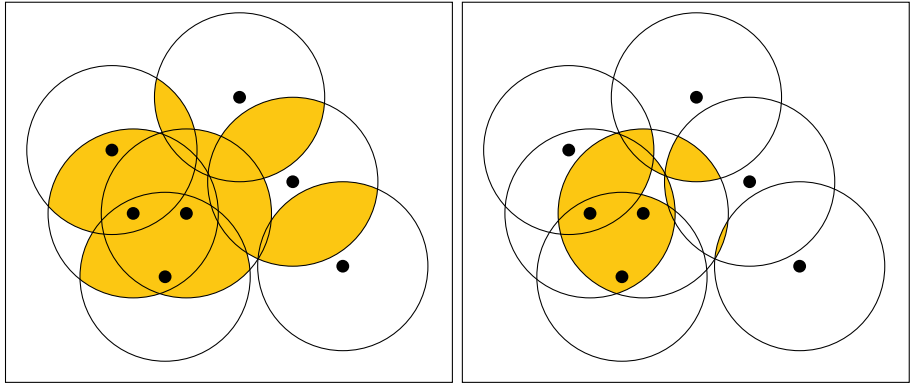

\subsection{Motivation and prior work}

For $k=1$ fixed, $(\cov{r}{1})_{r\in[0,\infty)}$ is the well-known \emph{offset filtration} (also known as the union of balls filtration),
a standard construction for analyzing the topology of a finite point sample across scales~\cite{edelsbrunner2010computational}.  It is a central object in the field of persistent homology.
While the persistent homology of this filtration is stable to small geometric perturbations of the
sites~\cite{Cohen-SteinerEdelsbrunnerHarer2007}, it is not robust with respect to outliers, and it can be insensitive to topological structure in high density regions of the data.

Within the framework of 1-parameter persistent homology, there have been many proposals for alternative constructions which address these issues.  These approaches include the removal of low density outliers~\cite{carlsson2008local}, filtering by a density function~\cite{bobrowski2017topological,chazal2009analysis,chazal2013clustering}, distance to measure constructions~\cite{anai2020dtm,Buchet15,Chazal2011,guibas2013witnessed}, kernel density functions~\cite{phillips2015geometric}, and subsampling~\cite{blumberg2014robust}.  A detailed overview of these approaches can be found in \cite{blumberg2020stability}.  

Several of these constructions have good stability properties or good asymp\-totic behavior.  
However, as explained in \cite{blumberg2020stability}, all of the known 1-parameter persistence strategies for handling outliers or variations in density share certain disadvantages: First, they all depend on a choice of a parameter.  Typically, this parameter specifies a fixed spatial scale or a density threshold at which the construction is carried out.  In the absence of a priori knowledge about the structure of the data, it may be unclear how to select such a parameter.  And if the data exhibits topological features at multiple spatial or density scales, it may be that no single parameter choice allows us to capture all the structure present in the data.  Second, constructions that fix a scale parameter are unable to distinguish between small spatial features and large ones, and constructions that fix a density or measure parameter are unable to distinguish features in densely sampled regions of the data from features involving sparse~regions.  

A natural way to circumvent these limitations is to consider a 2-parameter approach, where one constructs a  bifiltration from the data, rather than a 1-parameter filtration \cite{carlsson2009theory}.  The multicover bifiltration is one natural option for this.  Alternatives include the density bifiltrations of Carlsson and Zomoro\-dian~\cite{carlsson2009theory}, and the degree bifiltrations of Lesnick and Wright  \cite{Lesnick2015}; again, we refer the reader to \cite{blumberg2020stability} for a more detailed discussion.
Among these three options, the multicover bifiltration has two attractive features which together distinguish it from the others.  First, its construction does not depend on any additional parameters. Second, the multicover bifiltration satisfies a strong stability property, which in particular guarantees robustness to outliers \cite{blumberg2020stability}.

There is a substantial and growing literature on the use of bifiltrations in data analysis.  Most approaches begin by applying homology with coefficients to the bifiltration, to obtain an algebraic object called  \emph{bipersistence module}.  In contrast to the 1-parameter case, where the algebraic structure of a persistence module is completely described by a barcode \cite{ZomorodianCarlsson2005}, it is well-known that defining the barcode of bipersistence modules is problematic \cite{carlsson2009theory}.  Nevertheless, one can compute invariants of a bipersistence module which serve as useful surrogates for a barcode, and a number of ideas for this have been proposed  \cite{carlsson2009theory,cerri2013betti,harrington2017stratifying,Lesnick2015,vipond2018multiparameter}.

Regardless of which invariants of the multicover bifiltration we wish to consider, to work with this bifiltration computationally, the natural first step is to find a reasonably sized combinatorial  (i.e., simplicial or polyhedral) model for the bifiltration.  With such a model, recently developed algorithms such as those described in \cite{lesnick2019computing} and \cite{kerber2021fast} can efficiently compute minimal presentations and standard invariants of the homology modules of the bifiltration.

In the 1-parameter setting, there are two well-known simplicial models of the offset filtration. The \emph{\v{C}ech filtration}, is given at each scale by the nerve of the balls; the equivalence of the offset and \v Cech filtrations follows from the \emph{Persistent Nerve Theorem}.  For large point sets, the full \v Cech filtration is too large to be used in practical computations.  However, the \emph{alpha filtration} (also known as the \emph{Delaunay filtration})~\cite{Edelsbrunner1995,edelsbrunner2010computational} is a much smaller subfiltration of the \v Cech filtration which is also simplicial model for the offset filtration.  It is given at each scale by intersecting each ball with the Voronoi cell~\cite{voronoi1908recherches} of its center, and then taking the nerve of the resulting regions.  For $d$ small (say $d\leq 3$), the Delaunay filtration is readily computed in practice for many thousands of~points.

It is implicit in the work of Sheehy~\cite{Sheehy2012} that the multicover bifiltration has an elegant simplicial model, the \emph{subdivision-\v Cech bifiltration}, obtained via a natural filtration on the barycentric subdivision of each \v Cech complex; see also \cite[Appendix B]{cavanna17when} and \cite[Section 4]{blumberg2020stability}.  However, the subdivision-\v Cech bifiltration has exponentially many vertices in the size of the data, making it even more unsuitable for computations than the ordinary \v{C}ech filtration.

Edelsbrunner and Osang~\cite{Edelsbrunner2018} therefore seek to develop the computational theory of the multicover bifiltration using \emph{higher-order Delaunay complexes}~\cite{Edelsbrunner_shape_2006,Krasnoshchekov},  taking the alpha filtration as inspiration.  Assuming the sites are in general position, they define a polyhedral cell complex in $\R^{d+1}$ called the \emph{rhomboid tiling}, which contains all higher-order Delaunay complexes as planar~sections. 
Using the rhomboid tiling, they present a polynomial time algorithm to \linebreak compute the barcodes of a horizontal or vertical slice of the multicover bi\-filtration (i.e., of a one-parameter filtration obtained by fixing either one of the two parameters $r,k$). 
The case of fixed $r$ and varying $k$ is more challenging because the order-$k$ Delaunay complexes do not form a filtration.  The authors construct a zigzag filtration for this case. 
The problem of efficiently computing 2-para\-meter persistent homology of the multicover bifiltration is not addressed by~\cite{Edelsbrunner2018}.

We note that the subdivision-\v Cech bifiltration is more general than the Rhomboid tiling:  the rhomboid tiling is defined only for Euclidean data, \linebreak whereas the topological equivalence of the subdivision-\v Cech and multicover bifiltrations extends to data in any metric space where finite intersections of balls are contractible.

\subsection{Contributions}
We introduce the first efficiently computable combinatorial models of the 
 multicover bifiltration $\covonly$.  First, we introduce a simplicial model, whose
construction is based on two main ideas: In order to connect
the higher-order alpha complex constructions for $(r,k)$ and $(r,k+1)$, we simply overlay
their underlying covers to a ``double-cover'', whose nerve is
a simplicial complex that contains both alpha complexes. This yields a zigzag of simplicial filtrations.  The second main idea is that this zigzag can be ``straightened out'' to a (non-zigzaging) bifiltration,
simply by taking unions of prefixes in the zigzag sequence.  This straightening technique has previously been used by Sheehy to construct sparse approximations of 
Vietoris-Rips complexes~\cite{Sheehy2013}.  Together, these two ideas give rise to a bifiltration $\tcovonly$ 
of simplicial complexes.  

The bifiltration $\tcovonly$ can also be obtained directly as the persistent nerve of a ``thickening" of $\covonly$ constructed via mapping telescopes.  This observation leads to a simple 
proof of topological equivalence (i.e., weak equivalence; see Section~\ref{sec:background}) of $\covonly$ and $\tcovonly$ via the Nerve Theorem.  It follows that the persistent homology modules of $\covonly$ and $\tcovonly$ are isomorphic.

Our second contribution is to show that the rhomboid tiling
as defined in~\cite{Edelsbrunner2018} also gives rise to a (non-zigzaging) bifiltration
of polyhedral complexes that is topologically equivalent to the multicover.
We proceed in two steps: 
First, we slice every rhomboid at each integer value $k$
(slightly increasing the number of cells) and adapt the straightening trick used to construct $\tcovonly$.  
We prove the topological equivalence of this construction 
with the multicover bifiltration by relating the slice rhomboid filtration with $\tcovonly$.
The main observation is a one-to-one correspondence of maximal-dimensional cells
in both constructions, which leads to a proof via the Nerve Theorem.
Second, we relate the sliced and unsliced rhomboid tilings at every scale via a deformation retraction.

We give size bounds for both of the bifiltrations we introduce. For $n$ points in $\R^d$, we show that their size is $O(n^{d+1})$.  This is a decisive improvement over Sheehy's \v{C}ech-based construction, which has exponential dependence on $n$.

An efficient algorithm for computing rhomboid tilings has recently been presented in~\cite{EdOs20};
hence, using the accompanying implementation \linebreak \textsc{rhomboidtiling} of this algorithm, 
our second contribution gives us an efficient software to compute 
a bifiltration of cell complexes equivalent to $\covonly$, currently for points in $\R^2$ and $\R^3$.
We combine this implementation with the libraries \textsc{mpfree}
and \textsc{rivet} to demonstrate that minimal presentations of multicover persistent homology modules can now be efficiently computed, often within seconds, as can invariants such as the Hilbert function.

\section{Background}
\label{sec:background}

\subsection{Filtrations} 

For $P$ a poset, define a ($P$-indexed) filtration to be a collection of topological spaces $X=(X_p)_{p\in P}$ indexed by $P$, such that $X_p\subseteq X_q$ whenever $p\leq q\in P$.  For example, an $\N$-indexed filtration $X$ is a diagram of spaces and inclusions of the following form:
\[
\begin{tikzcd}
X_{0}  \arrow[hookrightarrow]{r} & X_{1} \arrow[hookrightarrow]{r} & X_{2}  \arrow[hookrightarrow]{r}&  \cdots.
\end{tikzcd}
\]
A \emph{morphism} $\varphi:X\to Y$ of $P$-indexed filtrations is a collection of continuous maps $(\varphi_p:X_p\to Y_p)_{p\in P}$ which commute with the inclusions in $X$ and $Y$.  In the language of category theory, a $P$-indexed filtration is a functor $P\to \mathbf{Top}$ whose internal maps are inclusions, and a morphism is a natural transformation.

Recall that the product poset $P\times Q$ of posets $P$ and $Q$ is defined by taking $(p,q)\leq (p',q')$ if and only if $p\leq p'$ and $q\leq q'$.    
When $P$ is the product of two totally ordered sets, we call a $P$-indexed filtration a \emph{bifiltration}.  

In the classical homotopy theory of diagrams of spaces, there is a standard analogue of the notion of homotopy equivalence for diagrams of spaces, called \emph{weak equivalence}.  We now define a version of this for $P$-indexed filtrations: A morphism of filtrations $\varphi:X\to Y$ is called an \emph{objectwise homotopy equivalence} if each $\varphi_p:X_p\to Y_p$ is a homotopy equivalence. If there exists a finite zigzag diagram of objectwise homotopy equivalences 
\[X\rightarrow Z_1 \leftarrow Z_2 \rightarrow \cdots \leftarrow Z_{n-1}\rightarrow Z_n \leftarrow Y\] 
connecting $X$ and $Y$,  then we say that $X$ and $Y$ are \emph{weakly equivalent}.  The terminology originates from the theory of model categories \cite{dwyer1995homotopy,hirschhorn2009model}.  See \cite{blumberg2017universality,lanari2020rectification,scoccola2020locally} for discussions of weak equivalence of diagrams in the context of TDA.
 
\begin{remark}
To motivate the consideration of zigzags in the definition above, we note that for $X$ and $Y$ a pair of weakly equivalent $P$-indexed filtrations, there is not necessarily an objectwise homotopy equivalence $f:X\to Y$.  For example, let $P=\R$, $X$ be the offset filtration on $\{0,1\}\subset \R$, and $Y$ be the nerve filtration of $X$.  It is easy to check that there is no objectwise homotopy equivalence $f:X\to Y$.  On the other hand, it follows from the Persistent Nerve Theorem (Theorem~\ref{thm:persistentnerve} below) that $X$ and $Y$ are weakly equivalent.  Moreover, one can construct a similar example of weakly equivalent filtrations for which there is no objectwise homotopy equivalence in either direction.  
\end{remark}
 
An objectwise homotopy equivalence $\varphi:X\to Y$ induces isomorphisms on the persistent homology modules of $X$ and $Y$.  Hence, weakly equivalent filtrations have isomorphic persistent homology modules.

We say a $P$-indexed filtration $X$ is \emph{Euclidean} if $X_p\subset \R^n$ for some $n$ and all $p\in P$.

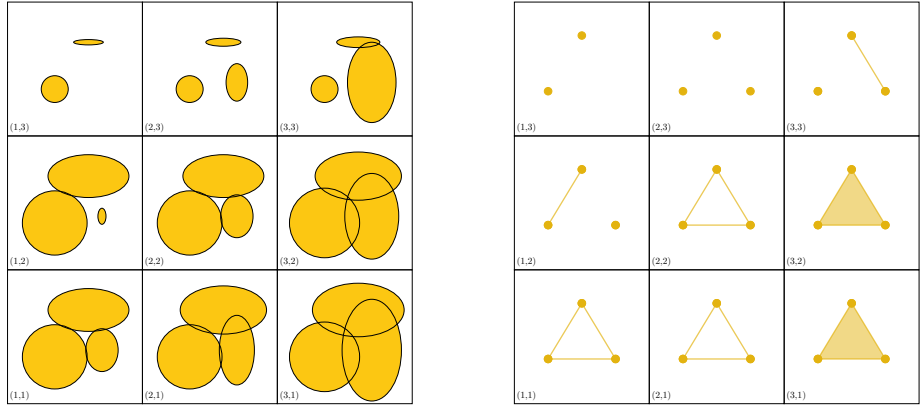
\begin{figure}
\centering
\begin{minipage}[t]{0.45\textwidth}
\scalebox{.355}{
\begin{tikzpicture}
 \tikzstyle{point}=[circle,thick,draw=black!10!niceyellow,fill=black!10!niceyellow,inner sep=0pt,minimum width=4pt,minimum height=4pt]
  \tikzset{
        basefont/.style = {font = \Large\sffamily},
          timing/.style = {basefont, sloped,above},
           label/.style = {basefont, align = left},
          screen/.style = {basefont, black, align = center,
                           minimum size = 6cm, fill = black!6, draw = black}};
\def\scale{5}
\foreach \x in {1,2,3} {
    \foreach \y in {1,2,3} {
        \node at (\scale*\x-\scale+0.45,\scale*\y-\scale+0.3) {(\x,\y)};
        \path[draw,thick] (\scale*\x-\scale,\scale*\y-\scale) rectangle (\scale*\x,\scale*\y);
    }
}
\def\Ax{3}
\def\Ay{3.5}
\def\Bx{1.75}
\def\By{1.75}
\def\Cx{3.5}
\def\Cy{2}


	\def\firstcov{(\Ax,\scale+\scale+\Ay) ellipse (0.55cm and 0.1cm)}
	\def\secondcov{(\Bx,\scale+\scale+\By) circle (0.5cm)}
	\def\thirdcov{(\Cx,\scale+\scale+\Cy) ellipse (.05cm and .05cm)}
	\fill[niceyellow] \firstcov;
	\fill[niceyellow] \secondcov;
    \draw \firstcov node(1)[below] {};
   \draw \secondcov node (2) [above] {};

	\def\firstcov{(\scale+\Ax,\scale+\scale+\Ay) ellipse (.65cm and .15cm)}
	\def\secondcov{(\scale+\Bx,\scale+\scale+\By) circle (.5cm)}
	\def\thirdcov{(\scale+\Cx,\scale+\scale+\Cy) ellipse (.4cm and .7cm)}
	\fill[niceyellow] \firstcov;
	\fill[niceyellow] \secondcov;
	\fill[niceyellow] \thirdcov;
    \draw \firstcov node[below] {};
    \draw \secondcov node [above] {};
    \draw \thirdcov node [below] {};

	\def\firstcov{(\scale+\scale+\Ax,\scale+\scale+\Ay) ellipse (.8cm and .2cm)}
	\def\secondcov{(\scale+\scale+\Bx,\scale+\scale+\By) circle (.5cm)}
	\def\thirdcov{(\scale+\scale+\Cx,\scale+\scale+\Cy) ellipse (.9cm and 1.5cm)}
	\fill[niceyellow] \firstcov;
	\fill[niceyellow] \secondcov;
	\fill[niceyellow] \thirdcov;
    \draw \firstcov node[below] {};
    \draw \secondcov node [above] {};
    \draw \thirdcov node [below] {};

	\def\firstcov{(\Ax,\scale+\Ay) ellipse (1.5cm and .8cm)}
	\def\secondcov{(\Bx,\scale+\By) circle (1.2cm)}
	\def\thirdcov{(\Cx,\scale+\Cy) ellipse (.15cm and .3cm)}
	\fill[niceyellow] \firstcov;
	\fill[niceyellow] \secondcov;
	\fill[niceyellow] \thirdcov;
    \draw \firstcov node[below] {};
    \draw \secondcov node [above] {};
    \draw \thirdcov node [below] {};

	\def\firstcov{(\scale+\Ax,\scale+\Ay) ellipse (1.5cm and .8cm)}
	\def\secondcov{(\scale+\Bx,\scale+\By) circle (1.2cm)}
	\def\thirdcov{(\scale+\Cx,\scale+\Cy) ellipse (.6cm and .8cm)}
	\fill[niceyellow] \firstcov;
	\fill[niceyellow] \secondcov;
	\fill[niceyellow] \thirdcov;
    \draw \firstcov node[below] {};
    \draw \secondcov node [above] {};
    \draw \thirdcov node [below] {};

	\def\firstcov{(2*\scale+\Ax,\scale+\Ay) ellipse (1.6cm and .9cm)}
	\def\secondcov{(2*\scale+\Bx,\scale+\By) circle (1.3cm)}
	\def\thirdcov{(2*\scale+\Cx,\scale+\Cy) ellipse (1cm and 1.6cm)}
	\fill[niceyellow] \firstcov;
	\fill[niceyellow] \secondcov;
	\fill[niceyellow] \thirdcov;
    \draw \firstcov node[below] {};
    \draw \secondcov node [above] {};
    \draw \thirdcov node [below] {};

	\def\firstcov{(\Ax,\Ay) ellipse (1.5cm and .8cm)}
	\def\secondcov{(\Bx,\By) circle (1.2cm)}
	\def\thirdcov{(\Cx,\Cy) ellipse (.6cm and .8cm)}
	\fill[niceyellow] \firstcov;
	\fill[niceyellow] \secondcov;
	\fill[niceyellow] \thirdcov;
    \draw \firstcov node[below] {};
    \draw \secondcov node [above] {};
    \draw \thirdcov node [below] {};

	\def\firstcov{(\scale+\Ax,\Ay) ellipse (1.6cm and .9cm)}
	\def\secondcov{(\scale+\Bx,\By) circle (1.2cm)}
	\def\thirdcov{(\scale+\Cx,\Cy) ellipse (.65cm and 1.3cm)}
	\fill[niceyellow] \firstcov;
	\fill[niceyellow] \secondcov;
	\fill[niceyellow] \thirdcov;
    \draw \firstcov node[below] {};
    \draw \secondcov node [above] {};
    \draw \thirdcov node [below] {};

	\def\firstcov{(2*\scale+\Ax,\Ay) ellipse (1.7cm and 1cm)}
	\def\secondcov{(2*\scale+\Bx,\By) circle (1.3cm)}
	\def\thirdcov{(2*\scale+\Cx,\Cy) ellipse (1.1cm and 1.9cm)}
	\fill[niceyellow] \firstcov;
	\fill[niceyellow] \secondcov;
	\fill[niceyellow] \thirdcov;
    \draw \firstcov node[below] {};
    \draw \secondcov node [above] {};
    \draw \thirdcov node [below] {};

\end{tikzpicture}
}
\end{minipage}
 \, \, \, \, \,\, \,
\begin{minipage}[t]{0.45\textwidth}
\scalebox{.355}{
\begin{tikzpicture}
 \tikzstyle{point}=[circle,thick,draw=black!10!niceyellow,fill=black!10!niceyellow,inner sep=0pt,minimum width=8pt,minimum height=8pt]
  \tikzset{
        basefont/.style = {font = \Large\sffamily},
          timing/.style = {basefont, sloped,above},
           label/.style = {basefont, align = left},
          screen/.style = {basefont, black, align = center,
                           minimum size = 6cm, fill = black!6, draw = black}};
\def\scale{5}
\def\Bx{1.25}
\def\By{1.67}
\def\Ax{2.5}
\def\Ay{3.75}
\def\Cx{3.75}
\def\Cy{1.67}

\foreach \x in {1,2,3} {
    \foreach \y in {1,2,3} {
        \node at (\scale*\x-\scale+0.45,\scale*\y-\scale+0.3) {(\x,\y)};    
        \path[draw,thick] (\scale*\x-\scale,\scale*\y-\scale) rectangle (\scale*\x,\scale*\y);
    }
}


   	\node (1) at (\Ax,\scale+\scale+\Ay) [point] {};
   	\node (2) at (\Bx,\scale+\scale+\By) [point] {};
   	\node (1) at (\Ax+\scale,\scale+\scale+\Ay) [point] {};
   	\node (2) at (\Bx+\scale,\scale+\scale+\By) [point] {};
   	\node (3) at (\Cx+\scale,\scale+\scale+\Cy) [point] {};
   	\node (1) at (\Ax+\scale+\scale,\scale+\scale+\Ay) [point] {};
   	\node (2) at (\Bx+\scale+\scale,\scale+\scale+\By) [point] {};
   	\node (3) at (\Cx+\scale+\scale,\scale+\scale+\Cy) [point] {};
 	 \draw[pattern=north east lines, pattern color=black!10!niceyellow, draw=black!10!niceyellow,line width=.05cm,opacity=.7] (1.center) --  (3.center) -- cycle;
   	\node (1) at (\Ax+\scale+\scale,\scale+\scale+\Ay) [point] {};
   	\node (2) at (\Bx+\scale+\scale,\scale+\scale+\By) [point] {};
   	\node (3) at (\Cx+\scale+\scale,\scale+\scale+\Cy) [point] {};
   	\node (1) at (\Ax,\Ay+\scale) [point] {};
   	\node (2) at (\Bx,\By+\scale) [point] {};
   	\node (3) at (\Cx,\Cy+\scale) [point] {};
 	 \draw[pattern=north east lines, pattern color=black!10!niceyellow, draw=black!10!niceyellow,line width=.05cm,opacity=.7] (1.center) -- (2.center) -- cycle;
   	\node (1) at (\Ax,\Ay+\scale) [point] {};
   	\node (2) at (\Bx,\By+\scale) [point] {};
   	\node (3) at (\Cx,\Cy+\scale) [point] {};
   	\node (1) at (\Ax+\scale,\Ay+\scale) [point] {};
   	\node (2) at (\Bx+\scale,\By+\scale) [point] {};
   	\node (3) at (\Cx+\scale,\Cy+\scale) [point] {};
 	 \draw[pattern color=black!10!niceyellow, draw=black!10!niceyellow,line width=.05cm,opacity=.7] (1.center) -- (2.center) -- cycle;
 	 \draw[pattern=north east lines, pattern color=black!10!niceyellow, draw=black!10!niceyellow,line width=.05cm,opacity=.7] (3.center) -- (2.center) -- cycle;
 	 \draw[pattern=north east lines, pattern color=black!10!niceyellow, draw=black!10!niceyellow,line width=.05cm,opacity=.7] (3.center) -- (1.center) -- cycle;
   	\node (1) at (\Ax+\scale,\Ay+\scale) [point] {};
   	\node (2) at (\Bx+\scale,\By+\scale) [point] {};
   	\node (3) at (\Cx+\scale,\Cy+\scale) [point] {};
   	\node (1) at (\Ax+\scale+\scale,\Ay+\scale) [point] {};
   	\node (2) at (\Bx+\scale+\scale,\By+\scale) [point] {};
   	\node (3) at (\Cx+\scale+\scale,\Cy+\scale) [point] {};
 	 \fill[black!10!niceyellow,opacity=.5] (1.center) -- (2.center) -- (3.center) -- cycle;
 	 \draw[pattern=north east lines, pattern color=black!10!niceyellow, draw=black!10!niceyellow,line width=.05cm,opacity=.7] (1.center) -- (2.center) -- cycle;
 	 \draw[pattern=north east lines, pattern color=black!10!niceyellow, draw=black!10!niceyellow,line width=.05cm,opacity=.7] (3.center) -- (2.center) -- cycle;
 	 \draw[pattern=north east lines, pattern color=black!10!niceyellow, draw=black!10!niceyellow,line width=.05cm,opacity=.7] (3.center) -- (1.center) -- cycle;
   	\node (1) at (\Ax+\scale+\scale,\Ay+\scale) [point] {};
   	\node (2) at (\Bx+\scale+\scale,\By+\scale) [point] {};
   	\node (3) at (\Cx+\scale+\scale,\Cy+\scale) [point] {};
   	\node (1) at (\Ax,\Ay) [point] {};
   	\node (2) at (\Bx,\By) [point] {};
   	\node (3) at (\Cx,\Cy) [point] {};
 	 \draw[pattern=north east lines, pattern color=black!10!niceyellow, draw=black!10!niceyellow,line width=.05cm,opacity=.7] (1.center) -- (2.center) -- cycle;
 	 \draw[pattern=north east lines, pattern color=black!10!niceyellow, draw=black!10!niceyellow,line width=.05cm,opacity=.7] (3.center) -- (2.center) -- cycle;
 	 \draw[pattern=north east lines, pattern color=black!10!niceyellow, draw=black!10!niceyellow,line width=.05cm,opacity=.7] (3.center) -- (1.center) -- cycle;
   	\node (1) at (\Ax,\Ay) [point] {};
   	\node (2) at (\Bx,\By) [point] {};
   	\node (3) at (\Cx,\Cy) [point] {};
   	\node (1) at (\Ax+\scale,\Ay) [point] {};
   	\node (2) at (\Bx+\scale,\By) [point] {};
   	\node (3) at (\Cx+\scale,\Cy) [point] {};
 	 \draw[pattern=north east lines, pattern color=black!10!niceyellow, draw=black!10!niceyellow,line width=.05cm,opacity=.7] (1.center) -- (2.center) -- cycle;
 	 \draw[pattern=north east lines, pattern color=black!10!niceyellow, draw=black!10!niceyellow,line width=.05cm,opacity=.7] (3.center) -- (2.center) -- cycle;
 	 \draw[pattern=north east lines, pattern color=black!10!niceyellow, draw=black!10!niceyellow,line width=.05cm,opacity=.7] (3.center) -- (1.center) -- cycle;
   	\node (1) at (\Ax+\scale,\Ay) [point] {};
   	\node (2) at (\Bx+\scale,\By) [point] {};
   	\node (3) at (\Cx+\scale,\Cy) [point] {};
   	\node (1) at (\Ax+\scale+\scale,\Ay) [point] {};
   	\node (2) at (\Bx+\scale+\scale,\By) [point] {};
   	\node (3) at (\Cx+\scale+\scale,\Cy) [point] {};
 	 \fill[black!10!niceyellow,opacity=.5] (1.center) -- (2.center) -- (3.center) -- cycle;
 	 \draw[pattern=north east lines, pattern color=black!10!niceyellow, draw=black!10!niceyellow,line width=.05cm,opacity=.7] (1.center) -- (2.center) -- cycle;
 	 \draw[pattern=north east lines, pattern color=black!10!niceyellow, draw=black!10!niceyellow,line width=.05cm,opacity=.7] (3.center) -- (2.center) -- cycle;
 	 \draw[pattern=north east lines, pattern color=black!10!niceyellow, draw=black!10!niceyellow,line width=.05cm,opacity=.7] (3.center) -- (1.center) -- cycle;
   	\node (1) at (\Ax+\scale+\scale,\Ay) [point] {};
   	\node (2) at (\Bx+\scale+\scale,\By) [point] {};
   	\node (3) at (\Cx+\scale+\scale,\Cy) [point] {};
\end{tikzpicture}
}
\end{minipage}
\caption{\emph{Left:} A bifiltration of good covers over $\{1<2<3\}\times\{3<2<1\}$. \emph{Right:} A bifiltration consisting of the nerves of the covers. The Persistent Nerve Theorem ensures that not only the individual spaces at scales $(n,m)$ are homotopy equivalent, but also that the two bifiltrations are weakly equivalent.}
\label{fig:persistentnerve}
\end{figure}


\subsection{The Persistent nerve theorem}\label{paragraph:nervethm}
A \emph{cover} of $X\subset \R^n$ is a collection $\firstcovername=\{\firstcovername^i\}_{i\in I}$ of subsets of $X$ whose union is $X$.   
The \emph{nerve} of $\firstcovername$ is the abstract simplicial complex 
\[
\nerve\left(\firstcovername\right):=\left\{\sigma\subset I\mid\bigcap_{i\in \sigma} \firstcovername^i \neq \emptyset \right\}.
\]  
We say that the cover is \emph{good} if it is finite and consists of closed, convex sets~\cite{edelsbrunner2010computational}.

One version of the \emph{Nerve Theorem}  asserts that $X$ and 
$\nerve\left(\firstcovername\right)$ are homotopy equivalent whenever $\firstcovername$ is a good cover of $X$~\cite{edelsbrunner2010computational,leray45}.
It is TDA folklore that this version of the Nerve Theorem can be extended to a persistent version; a proof appears in \cite{kerber2020}; see also \cite{chazal2008towards} for formulation of the Persistent Nerve Theorem in terms of open covers.  

In order to state the Persistent Nerve Theorem for closed, convex covers, we first extend the definition of a cover.  

\begin{definition}[Cover of a filtration]\label{Def:_Cover_of_Filtration}
Let $P$ be a poset and $X$ a $P$-indexed Euclidean filtration.  A \emph{cover of $X$} is a collection 
$\firstcovername=\{\firstcovername^i\}_{i\in I}$ of $P$-indexed filtrations  such that for each $p\in P$, $\{\firstcovername^i_p \mid i\in I\}$ is a cover of $\firstcovername_p$.  We say $\firstcovername$ is good if each $\firstcovername_p:=\{\firstcovername^i_p \mid i\in I\}$ is a good cover.
\end{definition}

The definition of the nerve above extends immediately to yield a \emph{nerve filtration} $\nerve\left(\firstcovername\right)$ associated to any cover of a filtration.
\begin{theorem}[Persistent Nerve Theorem~\cite{kerber2020}]\label{thm:persistentnerve}
Let $P$ be a poset, $X$ a $P$-indexed Euclidean filtration, and $\firstcovername$ a good cover of $X$.  There exists a diagram of objectwise homotopy equivalences
\[
\begin{tikzcd}
 X  &  \arrow[labels=above]{l}{\simeq} \arrow{r}{\simeq} \deltaspace{\firstcovername}{X}  & \nerve(\firstcovername).
 \end{tikzcd}
\]
\end{theorem}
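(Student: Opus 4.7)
The strategy is to construct, at each index $p\in P$, the Mayer--Vietoris blowup (i.e., homotopy colimit) of the good cover $\firstcovername_p$; this produces the intermediate space $\deltaspace{\firstcovername}{X}_p$ together with two natural projections
\[
X_p \;\xleftarrow{\;\pi_X\;}\; \deltaspace{\firstcovername}{X}_p \;\xrightarrow{\;\pi_N\;}\; \nerve(\firstcovername_p)
\]
that are both homotopy equivalences when $\firstcovername_p$ is good. The point of routing through the blowup is that it, together with the projections, is defined by the same formula at every $p$; hence everything commutes with the internal inclusions of $X$ and $\nerve(\firstcovername)$ strictly, and the two projections assemble into genuine morphisms of $P$-indexed filtrations. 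This reduces the persistent statement to the classical nerve theorem applied at each level.

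First I would define
\[
\deltaspace{\firstcovername}{X}_p \;:=\; \bigsqcup_{\sigma\in\nerve(\firstcovername_p)} \Delta^\sigma \times \bigcap_{i\in\sigma}\firstcovername^i_p \;\Big/\; \sim,
\]
where $\Delta^\sigma$ denotes the geometric simplex on $\sigma$ and $\sim$ identifies pairs along the obvious face/inclusion maps $\Delta^\tau\hookrightarrow\Delta^\sigma$ for $\tau\subset\sigma$. Since $p\le q$ implies $\firstcovername^i_p\subseteq \firstcovername^i_q$ (and hence $\nerve(\firstcovername_p)\subseteq \nerve(\firstcovername_q)$), there is a tautological inclusion $\deltaspace{\firstcovername}{X}_p \hookrightarrow \deltaspace{\firstcovername}{X}_q$, turning $\deltaspace{\firstcovername}{X}$ into a $P$-indexed filtration (up to a harmless embedding into a common ambient space). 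The maps $\pi_X(x,y):=y$ and $\pi_N(x,y):=x$ are defined at every index by the same formula and therefore commute with these inclusions on the nose.

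Next I would verify that each $\pi_X$ and $\pi_N$ is a homotopy equivalence pointwise. For $\pi_N$, the fiber over an interior point of $\Delta^\sigma$ is the nonempty intersection $\bigcap_{i\in\sigma}\firstcovername^i_p$, which is closed and convex, hence contractible. For $\pi_X$, the fiber over $y\in X_p$ is the geometric simplex spanned by $\{i\mid y\in \firstcovername^i_p\}$, again contractible. Either a partition of unity subordinate to the cover, or the standard argument that both projections are quasi-fibrations with contractible fibers over each open cell, produces explicit homotopy inverses; this is the classical nerve theorem for finite closed convex covers.

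The main obstacle is purely one of \emph{naturality}, not of existence. Pointwise homotopy inverses produced by partitions of unity depend on arbitrary choices and a priori will not commute with the inclusions $X_p\hookrightarrow X_q$ or $\nerve(\firstcovername_p)\hookrightarrow \nerve(\firstcovername_q)$ as $p$ varies. Routing through the blowup $\deltaspace{\firstcovername}{X}$ sidesteps this: because $\pi_X$ and $\pi_N$ are themselves strictly natural in $p$, the zigzag form of the conclusion absorbs the non-naturality of the homotopy inverses, and only the pointwise nerve theorem is needed. A minor technicality to address is that the classical nerve theorem is often stated for open good covers, whereas here the cover sets are closed and convex; one either invokes the closed-cover version (as in \cite{edelsbrunner2010computational}) or thickens each $\firstcovername^i_p$ slightly to an open convex neighborhood chosen compatibly in $p$, and applies the open-cover version to conclude.
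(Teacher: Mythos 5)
The paper does not prove this theorem itself; it cites \cite{kerber2020} and only sketches that the intermediate filtration $\deltaspace{\firstcovername}{X}$ can be taken to be a homotopy colimit of a diagram built from $\firstcovername$, analogous to the open-cover argument of \cite{chazal2008towards}. Your Mayer--Vietoris blowup \emph{is} precisely that homotopy colimit, and your two projections $\pi_X,\pi_N$ are the maps in the cited zigzag, so your proposal takes essentially the same route as the cited proof. Your central observation --- that the obstruction is naturality rather than existence, and that the blowup's projections are defined by the same formula at every $p$ and hence assemble into strict morphisms of filtrations --- is exactly the point of the persistent version.

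Two small cautions on the details. First, ``contractible fibers'' alone does not give a homotopy equivalence for an arbitrary map; you need the cellular/quasi-fibration structure of the blowup (or the standard homotopy-colimit comparison) to conclude that $\pi_N$ and $\pi_X$ are equivalences, which you gesture at but would need to spell out. Second, the fallback of thickening each closed convex $\firstcovername^i_p$ to an open convex neighborhood is not automatic: naive thickening can create new intersections and thus change the nerve, and choosing thickenings compatibly across all $p\in P$ simultaneously requires an argument. The cleaner route is to invoke a nerve theorem that handles finite closed convex covers directly, as \cite{kerber2020} does.
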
 
As shown in \cite{kerber2020}, the intermediate filtration $\deltaspace{\firstcovername}{X}$ in the statement of the theorem can be taken to be a homotopy colimit of a diagram constructed from $\firstcovername$, just as in the proof of the Persistent Nerve Theorem for open covers \cite{chazal2008towards}.  Note that if $P$ is a singleton set, the Persistent Nerve Theorem specializes to the classical version of the Nerve Theorem mentioned~above.

\begin{figure}
\centering
\includegraphics[width=0.495\textwidth]{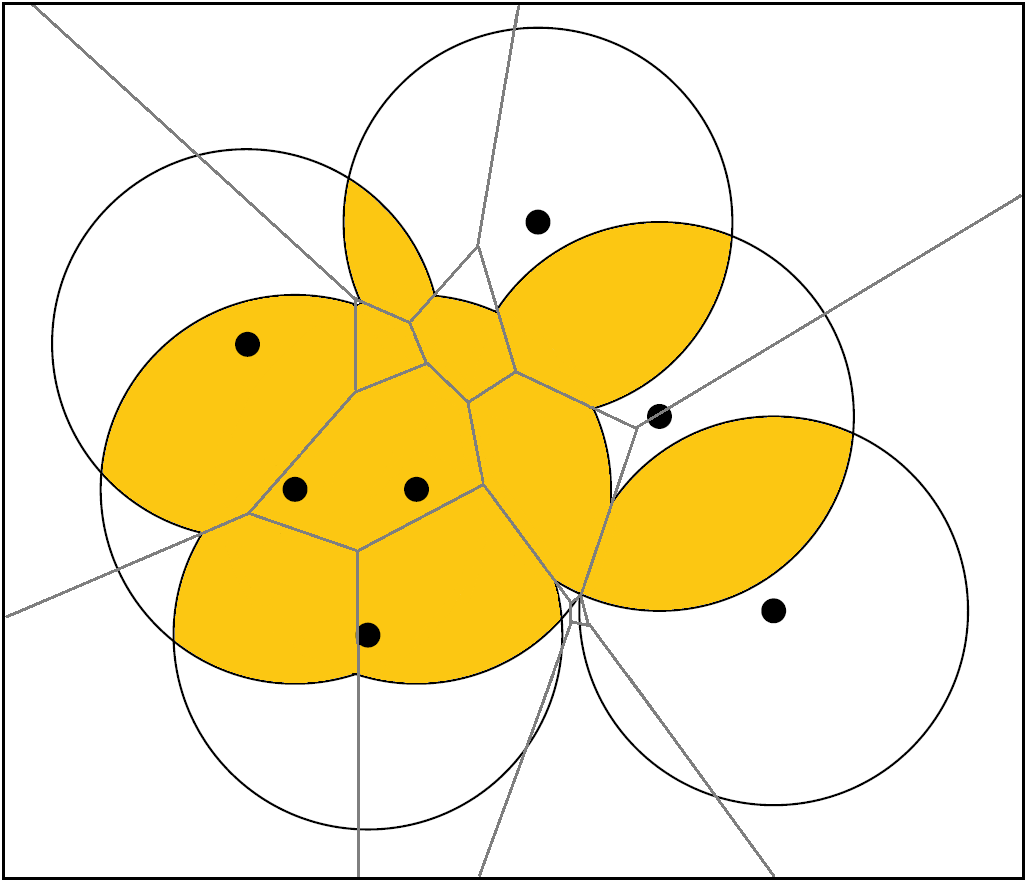}
\includegraphics[width=0.495\textwidth]{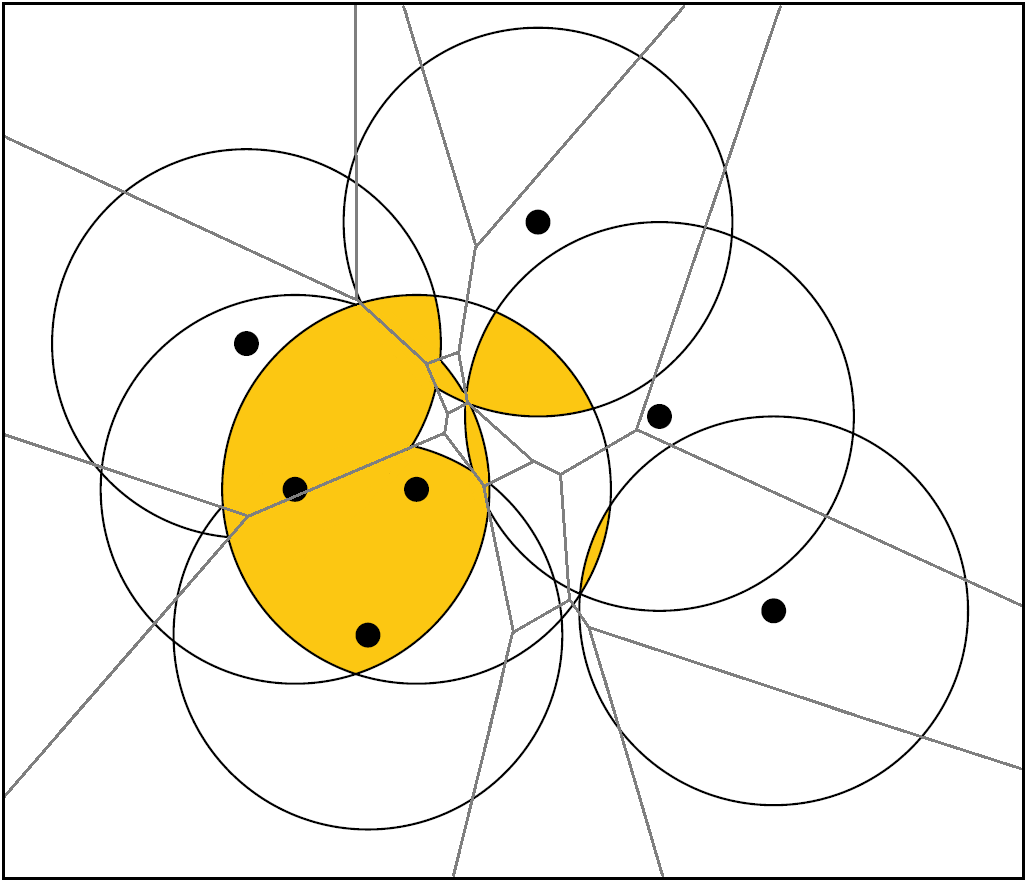}
\caption{\emph{Left:} The 2-fold cover of some points with respect to a certain radius overlapped with its Voronoi diagram of order 2. $\vorcovonly$ is combinatorially simpler than  
$\bigcovonly$.
\emph{Right:} The corresponding 3-fold cover overlapped with its Voronoi diagram of order 3.}
\label{fig:vorcov}
\end{figure} 

\subsection{Multicovers} 

As indicated in the introduction, the multicover bifiltration of a finite set $A\subset \R^d$ is the $\R\times \N^{\mathrm{op}}$-indexed bifiltration $\covonly$ given by
\[ 
\cov{r}{k}:=\left\{  b\in\mathbb{R}^d \mid \textnormal{ } ||b-a||\leq r \textnormal{ for at least $k$ points } a\in A  \right\},
\]
where as above, $\N:=\{0,1,2,\dots\}$.  Note that $\cov{r}{0}=\R^d$ for all $r\in [0,\infty)$.  Given this, one may wonder why we allow $k$ to take the value $0$ in our definition of $\covonly$.  As we will see in Section \ref{sec:rhomboid}, this turns out to be convenient for comparing $\covonly$ to the rhomboid bifiltration.  

As a first step towards constructing a simplicial model of $\covonly$, we identify a good cover for $\cov{r}{k}$ for fixed $r$ and $k$.
For $\widetilde{A}\subset A$ we define
\[ \covreg{r}{\widetilde{A}}:=\left\{ b\in\mathbb{R}^d \mid\textnormal{ } ||b-\tilde{a}||\leq r \textnormal{ for all } \tilde{a}\in \widetilde{A} \right\}.
\]
Each $\covreg{r}{\widetilde{A}}$ is closed and convex.  Letting \[\bigcov{r}{k}:=\left\{ \covreg{r}{\widetilde{A}} \mid  \widetilde{A}\subset A,  |\widetilde{A}|=k \right\},\] we have that $\bigcov{r}{k}$ is a good cover of $\cov{r}{k}$.  
Hence by the Nerve Theorem, $\cov{r}{k}$ is homotopy equivalent to $\nerve(\bigcov{r}{k})$.  For fixed $k$ and $r\leq r'$, we have an inclusion $\nerve(\bigcov{r}{k})\hookrightarrow \nerve(\bigcov{r'}{k})$.  

Note that these nerves can be quite large: for large enough $r$, $\nerve(\bigcov{r}{k})$ contains $\binom{|A|}{k}$ vertices.   To obtain a smaller simplicial model of $\cov{r}{k}$, we use the generalization of Delaunay triangulations to higher-order Delaunay complexes. For a subset $\tilde{A} \subset A$ with $|\tilde{A}|=k$, define its \emph{order-$k$ Voronoi region} as 
\[
\vorreg{\tilde{A}}:=\left\{ b\in \mathbb{R}^d \mid \ ||b-\tilde{a}||\leq ||b-a|| \textnormal{ for all } \tilde{a}\in \tilde{A}, a\in A \setminus \tilde{A} \right\}.
\]
The set of all order-$k$ Voronoi regions  yield a decomposition of $\R^d$ into closed convex subsets having the same $k$ closest points of $A$ in common.  This decom\-position is called the \emph{order-$k$ Voronoi diagram}~\cite{aurenhammer1987power,Edelsbrunner1986}.  We denote it as $\vor{k}$.

For any $r\in \R$ and $k\in \N$, intersecting each order-$k$ Voronoi region with the corresponding multicovered region of fixed~radius~$r$ yields the following good cover of $\cov{r}{k}$.
\[
\vorcov{r}{k}:=\left\{ \covreg{r}{\widetilde{A}}\cap \vorreg{\tilde{A}}\mid\tilde{A}\subset A, |\tilde{A}|=k \right\}.
\]
For an illustration, see Figure~\ref{fig:vorcov}. 
The nerve of $\vorcov{r}{k}$, which we will denote $\del{r}{k}$, is called an \emph{order-$k$ Delaunay complex}.  
By the Nerve Theorem, $\del{r}{k}$ and $\cov{r}{k}$ are homotopy equivalent. 
Note that $\del{r}{1}$  is the \emph {alpha complex} of radius $r$~\cite{Edelsbrunner1995,edelsbrunner2010computational}, whereas $\del{r}{0}$  is a single point for all $r\in [0,\infty)$.
A different but related  concept is the \emph{order-$k$ Delaunay mosaic}, which is the geometric dual of the order-$k$ Voronoi diagram~\cite{Edelsbrunner2018}; see Section~\ref{sec:rhomboid}.
 

\section{A simplicial Delaunay bifiltration}
\label{sec:bifiltration}

\begin{figure}
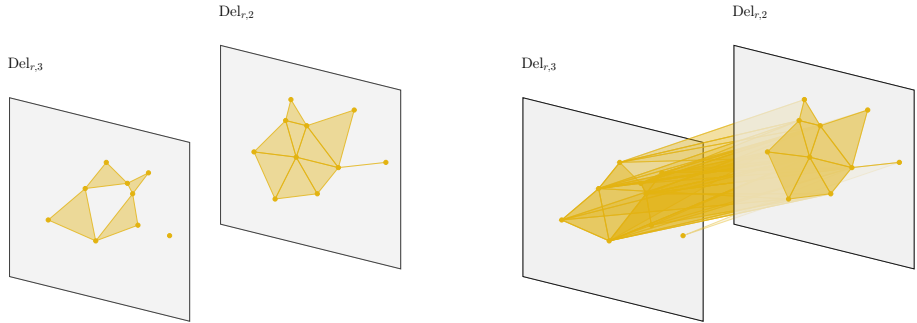

\centering
\scalebox{.395}{

}
\caption{\emph{Left:} The Delaunay complexes of order 2 and 3 of our running example. \emph{Right:} The construction of the simplicial complex $\tdel{r}{2}$. $\tdel{r}{2}$ consists of the Delaunay complexes ${\del{r}{2}}$ and ${\del{r}{3}}$, and additional mixed simplices connecting these. This connection arises from intersections of the $2$-, and $3$-fold covers restricted to their Voronoi diagrams of order 2 and 3, respectively.}
\label{fig:delconstruction}
\end{figure}

For fixed $r\geq 0$, we have inclusions
\[ 
\begin{tikzcd}
\cdots \arrow[hookrightarrow]{r} & \cov{r}{2} \arrow[hookrightarrow]{r} & \cov{r}{1} \arrow[hookrightarrow]{r} & \cov{r}{0},
\end{tikzcd}
\]

but there are no analogous inclusions  $\del{r}{k}\hookrightarrow \del{r}{k-1}$ between the higher-order Delaunay complexes. 
Indeed, we do not even have inclusions of the vertex sets.  Consequently, $(\vorcov{r}{k})_{(r,k)\in [0,\infty)\times\N}$ is not a bifiltered cover of $\covonly$.  

To correct for this, we first define
\[ 
 \tdel{r}{k}:=\nerve\left( \vorcov{r}{k} \cup \vorcov{r}{k+1} \right).
\]
See Figure~\ref{fig:delconstruction} for an illustration.  Note that $\vorcov{r}{k} \cup \vorcov{r}{k+1}$ covers the same space as  $\vorcov{r}{k}$ and that we have inclusions
\[
\begin{tikzcd}
  \del{r}{k+1}  \arrow[hookrightarrow]{r} & \tdel{r}{k} \arrow[hookleftarrow]{r} &  \del{r}{k}.
\end{tikzcd}  
\]
Letting $n:=|A|$, we see that $\tdel{r}{n}=\del{r}{n}$ and $\tdel{r}{k}=\emptyset$ for all $k>n$. 

We now define a $[0,\infty)\times \N^{\mathrm{op}}$-indexed simplicial bifiltration $\tcovonly$ by
\[
\tcov{r}{k}:= \bigcup_{i\geq k} \tdel{r}{i}=\bigcup_{i= k}^n \tdel{r}{i}.
\]
For an illustration, see Figure~\ref{fig:excisionidea}. 
 Note that $\tcov{r}{k}$ is generally \emph{not} equal to the nerve of the union of all $\vorcov{r}{i}$, $i\geq k$, which is a much larger object. 
 
\begin{figure}
\centering
\scalebox{.335}{

}
\caption{
$\tcov{r}{2}$ is the union of all $\tdel{r}{i}$, $i\geq 2$.  We have that $\tcovonly_{r,k}=\emptyset$ for $k$ greater than the number of sites in $A$. Thus, in this case, $\del{r}{k}$ is empty for $k\geq 5$.
}
\label{fig:excisionidea}
\end{figure}

\subsection{Weak equivalence of $\covonly$ and $\tcovonly$}
We next prove that $\covonly$ and $\tcovonly$ are weakly equivalent.  The proof will use the following version of the usual mapping telescope construction  \cite[Section 3.F]{hatcher2005algebraic}: given any sequence of continuous maps
\[C = (C_1\rightarrow C_{2} \rightarrow \cdots  \rightarrow C_n),\]
let $M_C$, the \emph{mapping telescope of $C$}, be the quotient of the disjoint union \[\bigsqcup_{i=1}^{n} C_i\times I\]
given by gluing each $C_{i}\times \{1\}$ to $C_{i+1}\times \{0\}$ via the map $C_{i}\to C_{i+1}$.  It is easy to check that we have a deformation retraction $M_C\to C_n$.


\begin{theorem}\label{thm:covbifiltration}
The multicover bifiltration $\covonly$ is weakly equivalent to $\tcovonly$.
\end{theorem}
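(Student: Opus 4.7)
My strategy is to construct a Euclidean bifiltration $\covonly^*$ which is objectwise homotopy equivalent to $\covonly$ and carries a good cover whose nerve filtration is exactly $\tcovonly$. Applying the Persistent Nerve Theorem (\Cref{thm:persistentnerve}) then yields a zigzag
\[
\covonly \xleftarrow{\simeq} \covonly^* \xleftarrow{\simeq} \deltaspace{\firstcovername}{\covonly^*} \xrightarrow{\simeq} \nerve(\firstcovername) = \tcovonly
\]
of objectwise homotopy equivalences, which is by definition a weak equivalence.

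\textbf{Thickening via mapping telescopes.} For each $(r,k)\in [0,\infty)\times \N$, set
\[
\cov{r}{k}^* := \bigcup_{i\geq k} \cov{r}{i}\times [i,i+1] \;\subset\; \R^{d+1},
\]
with the convention that $\cov{r}{i}=\emptyset$ for $i>|A|$. Because $\cov{r}{i+1}\subseteq \cov{r}{i}$, the set $\cov{r}{k}^*$ is the mapping telescope of the descending chain $\cov{r}{|A|}\hookrightarrow\cdots\hookrightarrow \cov{r}{k}$, and collapsing along its $\R$-coordinate gives a natural deformation retraction onto $\cov{r}{k}\times\{k\}$. These thickenings are nested in both parameters, defining a Euclidean bifiltration $\covonly^*$, and the retractions assemble into an objectwise homotopy equivalence $\covonly^*\to\covonly$.

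\textbf{Good cover and nerve identification.} For each $\widetilde{A}\subseteq A$ with $|\widetilde{A}|=j\geq k$, define
\[
\firstcovername^{\widetilde{A}}_{r,k} := \bigl(\covreg{r}{\widetilde{A}} \cap \vor{j}(\widetilde{A})\bigr) \times [j,j+1],
\]
and let $\firstcovername_{r,k}$ be the collection of such elements as $\widetilde{A}$ ranges over subsets of $A$ with $|\widetilde{A}|\geq k$. On the slab $\R^d\times[j,j+1]$ this restricts to $\vorcov{r}{j}$, which already covers $\cov{r}{j}$, so $\firstcovername_{r,k}$ covers $\cov{r}{k}^*$. Each element is the product of a closed convex set with a closed interval, hence closed and convex; moreover, each element is monotone in $r$ and independent of $k$ once $k\leq|\widetilde{A}|$, so $\firstcovername$ assembles into a good cover of the filtration $\covonly^*$ in the sense of \Cref{Def:_Cover_of_Filtration}. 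Finally, a family $\{\widetilde{A}_1,\ldots,\widetilde{A}_s\}$ is a simplex of $\nerve(\firstcovername_{r,k})$ precisely when the intervals $[|\widetilde{A}_\ell|,|\widetilde{A}_\ell|+1]$ have a common point and the spatial parts meet in $\R^d$; the first condition forces the sizes to lie in $\{i,i+1\}$ for some $i\geq k$, and the second is then exactly the defining relation of a simplex of $\tdel{r}{i}=\nerve(\vorcov{r}{i}\cup\vorcov{r}{i+1})$. Hence $\nerve(\firstcovername_{r,k}) = \bigcup_{i\geq k}\tdel{r}{i}=\tcov{r}{k}$, naturally in $(r,k)$, and the Persistent Nerve Theorem delivers the claim.

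\textbf{Main obstacle.} The subtle point is the nerve identification. The intervals must be chosen so that overlap between bands from different sizes occurs only on the single hyperplanes $\R^d\times\{j+1\}$, which keeps the $\R$-coordinate interaction between at most two consecutive levels and so matches the two-level mixing in $\tdel{r}{j}$; one must verify that no spurious simplices spanning three or more levels arise, and that the boundary case $|\widetilde{A}|=k$ causes no trouble (here the full slab $\R^d\times[k,k+1]$ sits inside $\cov{r}{k}^*$, so no truncation occurs). The remaining bookkeeping is the routine verification that the cover behaves functorially under both inclusions $r\leq r'$ and $k\geq k'$.
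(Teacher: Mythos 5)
Your proof is correct and takes essentially the same route as the paper: thicken $\covonly$ to a weakly equivalent bifiltration via (an explicit Euclidean realization of) the mapping telescope, cover it by cylinders over the higher-order Voronoi cover elements, and invoke the Persistent Nerve Theorem. Your explicit embedding $\cov{r}{k}^*\subset\R^{d+1}$ and worked-out nerve identification merely fill in details the paper leaves implicit --- the paper states only ``it is easy to see that $\tcovonly$ is isomorphic to the persistent nerve of this cover'' --- and the explicit embedding has the side benefit of making the Euclidean hypothesis of \Cref{thm:persistentnerve} manifestly satisfied.
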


\begin{proof}
Our proof strategy is similar to ones used for sparse filtrations~\cite{cavanna15geometric,sheehy21sparse}.
We will observe that $\tcovonly$ is isomorphic to the nerve of a good cover of a bifiltration~$X$ which is weakly equivalent to $\covonly$.  The result then follows from the Persistent Nerve~Theorem.  

Let $X_{r,k}$ be the mapping telescope of the sequence 
\[\cov{r}{n}\hookrightarrow \cov{r}{n-1}\hookrightarrow \cdots \hookrightarrow \cov{r}{k}.\]  Letting $r$ and $k$ vary, the spaces $X_{r,k}$ assemble into a $([0,\infty)\times \N^{\mathrm{op}})$-indexed bifiltration $X$, and the deformation retractions $X_{r,k}\to \cov{r}{k}$ assemble into an objectwise homotopy equivalence $X\to \covonly$.  
Letting \[q:\bigsqcup_{i=k}^{n} \cov{r}{i}\times I\to X_{r,k}\] denote the quotient map, we have a good cover 
\[\firstcovername_{r,k}:=\{q(U\times I)\mid U\in \bigsqcup_{i=k}^n \vorcov{r}{i}\}\]
of $X_{r,k}$, and the collection of all such covers as $r$ and $k$ varies assembles into a good cover $\firstcovername$ of $X$.  

To finish the proof, it suffices to observe that $\nerve(\firstcovername)$ is isomorphic to $\tcovonly$.  First, note that vertices of $\nerve(\firstcovername_{r,k})$ correspond bijectively to non-empty elements of \[\bigsqcup_{i=k}^{n} \vorcov{r}{i},\] as do vertices of $\tcov{r}{k}$.  This gives us a bijection from the vertex set of  $\nerve(\firstcovername_{r,k})$ to the vertex set of $\tcov{r}{k}$.  In fact, this bijection is a simplicial isomorphism $\varphi_{r,k}:\nerve(\firstcovername_{r,k})\to \tcov{r}{k}$, because for all  \[\{U_1,\ldots,U_m\} \subset \bigsqcup_{i=k}^{n} \vorcov{r}{i},\]
we have \[\bigcap_{i=1}^m q(U_i\times I )\ne \emptyset\] if and only if both of the following conditions hold:
\begin{enumerate}
\item  there exists $j\geq k$ such that $\{U_1,\ldots,U_m\} \subset  \vorcov{r}{j} \cup \vorcov{r}{j+1},$
\item $\bigcap_{i=1}^m U_i\ne \emptyset$.
\end{enumerate}
The isomorphisms $\varphi_{r,k}$ are natural in $r$ and $k$, so assemble into an isomorphism 
from $\nerve(\firstcovername)$ to $\tcovonly$. 
\end{proof}

\subsection{Truncations of S-Del}
When the point cloud is large, it may be computationally difficult to construct the full bifiltration $\tcovonly$.  We instead consider, for $\kmax\in\N$, the bifiltration $\tcovonly^{\leq\kmax}$ constructed in the same way, but  disregarding all order-$k$ Voronoi cells with $k>\kmax$, \[\tcovonly^{\leq\kmax}_{r,k}:= \bigcup_{(\kmax-1) \geq i\geq k} \tdel{r}{i}.\] Note that $\tcovonly^{\leq |A|}=\tcovonly$.  
Viewing $\tcovonly^{\leq \kmax}$ as a $([0,\infty)\times \{0,\ldots,\kmax\}^{\mathrm{op}})$-indexed bifiltration, the proof of Theorem~\ref{thm:covbifiltration} adapts immediately to show that $\tcovonly^{\leq \kmax}$ is weakly equivalent to the restriction of $\covonly$ to $[0,\infty)\times \{0,\ldots,\kmax\}^{\mathrm{op}}$.

\subsection{Size of S-Del}
By the \emph{size} of a bifiltration $X$, we mean the number of simplices in the largest simplicial complex in $X$. 
If the sites $A\subset \R^d$ are not in general position, the size of $\tcovonly^{\leq\kmax}$ can be huge; indeed, if all points of $A$ lie on a circle in $\R^2$ and $r$ is at least the radius of this circle, then $\del{r}{k}$ has $2^{n \choose k}$ simplices, so $\tcovonly^{\leq \kmax}$ is at least as large.  However, if $A$ is in general position, then the situation is far better:

\begin{proposition}\label{Prop:Simplicial_Size_Bounds}
Let $A\subset\R^d$ be a set of $n$ sites in general position, with a constant dimension $d$. Then $\tcovonly^{\leq \kmax}$ has size 
$O(n^{ \floor*{\frac{d+1}{2}}} K^{\ceil*{\frac{d+1}{2}}})$.
In particular, $\tcovonly$ has size $O(n^{d+1})$.
\end{proposition}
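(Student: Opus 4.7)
The plan is to reduce the size of $\tcovonly^{\leq \kmax}$ to a sum over $i$ of the sizes of the mixed complexes $\tdel{r}{i}$, and then bound each summand using classical Clarkson-Shor type estimates on higher-order Voronoi diagrams.

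First, I would reduce to a single complex. Since $\tcovonly^{\leq \kmax}$ is a bifiltration indexed by $\R \times \{1,\dots,\kmax\}^{\mathrm{op}}$, its largest simplicial complex is $\tcovonly^{\leq \kmax}_{r,1} = \bigcup_{i=1}^{\kmax - 1} \tdel{r}{i}$ for $r$ large enough, since every other $\tcovonly^{\leq \kmax}_{r',k}$ is a subcomplex. Hence
\[
\mathrm{size}(\tcovonly^{\leq \kmax}) \leq \sum_{i=1}^{\kmax - 1} |\tdel{r}{i}|.
\]

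Second, I would bound each $|\tdel{r}{i}|$ individually. Since $\tdel{r}{i} = \nerve(\vorcov{r}{i} \cup \vorcov{r}{i+1})$, its simplices are in bijection with non-empty intersections of finite subcollections of cells drawn from the order-$i$ and order-$(i+1)$ Voronoi diagrams (restricted to multicovered regions). These intersections correspond naturally to faces of the horizontal slab of the rhomboid tiling between levels $i$ and $i+1$. The classical Clarkson-Shor bound gives that the order-$k$ Voronoi diagram in $\R^d$ with $n$ sites in general position has $O(k^{\lceil d/2 \rceil}(n-k)^{\lfloor d/2 \rfloor})$ cells, and each such cell is adjacent to only a bounded number of others in general position, contributing $O(1)$ simplices to $\tdel{r}{i}$. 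This yields $|\tdel{r}{i}| = O(i^{\lceil d/2 \rceil}(n-i)^{\lfloor d/2 \rfloor})$.

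Third, I would sum and simplify. Using $n-i \leq n$ and bounding a polynomial sum,
\[
\sum_{i=1}^{\kmax - 1} i^{\lceil d/2 \rceil}(n-i)^{\lfloor d/2 \rfloor} \leq n^{\lfloor d/2 \rfloor} \sum_{i=1}^{\kmax - 1} i^{\lceil d/2 \rceil} = O\!\left(n^{\lfloor d/2 \rfloor} \kmax^{\lceil d/2 \rceil + 1}\right).
\]
A case check on the parity of $d$, together with the assumption $\kmax \leq n$, shows this is dominated by $O(n^{\lfloor (d+1)/2 \rfloor} \kmax^{\lceil (d+1)/2 \rceil})$: the two bounds are already equal for even $d$, while for odd $d$ one factor of $\kmax$ can be traded for $n$. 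Setting $\kmax = n$ in $\tcovonly = \tcovonly^{\leq n}$ yields the claimed $O(n^{d+1})$ bound.

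The main obstacle will be justifying the per-level bound $|\tdel{r}{i}| = O(i^{\lceil d/2 \rceil}(n-i)^{\lfloor d/2 \rfloor})$, because it must account for the \emph{mixed} simplices whose vertices include Voronoi cells of both orders $i$ and $i+1$; these are not directly controlled by a single application of the Clarkson-Shor bound to one order. The cleanest remedy is to pass through the rhomboid tiling of Edelsbrunner-Osang: mixed simplices correspond to rhomboidal cells spanning levels $i$ and $i+1$, and the known polynomial complexity of the rhomboid tiling up to level $\kmax$ absorbs them.
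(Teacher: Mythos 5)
Your overall architecture --- reduce to the single largest complex $\bigcup_{i<K}\tdel{r}{i}$, bound each level via Clarkson--Shor, and sum --- matches the paper's proof, and the paper even notes at the end of its appendix that one can route the argument through the rhomboid tiling as you suggest. But there is a genuine gap in how you bound $|\tdel{r}{i}|$, and the remedy you propose does not close it.

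The problem is that you implicitly want a bijection (or even just a surjection) from cells of the sliced rhomboid tiling onto simplices of $\tcov{r}{k}$, and no such thing exists. Simplices of the nerve $\tdel{r}{i}$ are \emph{subcollections} of Voronoi regions with nonempty common intersection, not the intersection regions themselves, so they are not in bijection with faces of an overlay or of the rhomboid slab. The paper's \Cref{lem:correspondence} and \Cref{Rem:Injection} give an injection $\mathcal J$ in the \emph{opposite} direction, from cells of $\trhom{r}{k}$ into simplices of $\tcov{r}{k}$, and this injection is not surjective and does not preserve dimension: Figure~\ref{fig:diff} exhibits a $5$-simplex of $\tcovonly$ whose preimage cell in the rhomboid tiling is only $3$-dimensional. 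So ``the known polynomial complexity of the rhomboid tiling absorbs the mixed simplices'' is exactly the step that fails --- there can be strictly more simplices than cells. Similarly, your claim that ``each cell is adjacent to only a bounded number of others'' is false as stated (a Voronoi cell can have arbitrarily many neighbors); the correct local-finiteness statement, in general position, is that only boundedly many cells meet at a \emph{vertex}.

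What is missing is a dimension bound for $\tcovonly^{\leq K}$. The paper proves (its \Cref{Lem:Dim_Bound_Fixed_k} and \Cref{Lem:Dim_Bound_S-Del}) that $\dim(\tcovonly^{\leq K})\leq 2\binom{d+1}{\lfloor(d+1)/2\rfloor}$, a constant depending only on $d$, by the at-a-vertex argument just mentioned. With that in hand, the paper counts only \emph{maximal} simplices, bounding them by Voronoi vertices of consecutive orders (its \Cref{lem:maximal_simplices}); since $\mathcal J$ \emph{is} a bijection on maximal cells/simplices, this is also where the rhomboid-tiling detour would legitimately enter. Then the elementary fact that a simplicial complex of dimension $D$ with $m$ maximal simplices has at most $2^{D+1}m$ simplices converts the count of maximal simplices into a count of all simplices, and the Clarkson--Shor bound finishes the job. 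Your summation and parity bookkeeping at the end are fine; it is the transition from ``cells of the rhomboid tiling'' to ``simplices of $\tcovonly$'' that needs the maximal-simplex restriction plus the constant dimension bound.
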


In contrast to Proposition~\ref{Prop:Simplicial_Size_Bounds}, the number of vertices in Sheehy's subdivision-\v Cech model~\cite{Sheehy2012}  
grows exponentially with $n$, regardless of whether $A$ is in general position.
 
In brief, the idea of the proof is to bound the 
number of maximal simplices in $\tcovonly^{\leq \kmax}$ 
using a bound of Clarkson and Shor on the number of Voronoi vertices at levels $\leq k$~\cite{clarkson1989applications}.
The result then follows by observing that the dimension of the complex is a constant that only
depends on $d$. However, this dependence
on $d$ is doubly exponential, so the $O$-notation hides a large factor
if $d$ is large.  We now give details of the proof.

For $k\in \N$, and a bifiltration $X$, we let $X_{\infty,k}$ denote the largest simplicial complex of the form $X_{r,k}$, provided this exists.

In the following lemmas, $A\subset\R^d$ is assumed to be in general position.

\begin{lemma}\label{Lem:Dim_Bound_Fixed_k}
For all $k\in \N$, we have 
\[
\dim(\del{\infty}{k})
\leq {d+1 \choose \floor*{\frac{d+1}{2}}}.
\]
\end{lemma}
\begin{proof}
A simplex $\sigma\in \del{\infty}{k}$ is a set of order-$k$ Voronoi regions with a point of common intersection. Let $x$ be such a point.  Each order-$k$ Voronoi region $R\in \sigma$ is indexed by a subset of $A$ with $k$ elements, which we denote as $\mathrm{sites}(R)$.  Let 
\begin{align*}
s&=\min\, \{r\geq 0\mid x\in \cov{r}{k}\},\\
A_{on}&=\{p\in A\mid \|p-x\|=s\}, \\
A_{in}&=\{p\in A\mid \|p-x\|<s\}.
\end{align*}
By minimality of $s$, we have $|A_{in}|<k$, and by the assumption of general position, we have $|A_{on}|\leq d+1$.  Moreover, for all $R\in \sigma$, we have $A_{in}\subset \mathrm{sites}(R)$ and $\mathrm{sites}(R)\subset A_{in}\cup A_{on}$.  
Thus \[\dim(\sigma)\leq{|A_{on}| \choose k-|A_{in}|}-1\leq  {d+1\choose k-|A_{in}|}\leq {d+1 \choose \floor*{\frac{d+1}{2}}}.
\]
\end{proof}

\begin{lemma}\label{Lem:Dim_Bound_S-Del}
For all $\kmax\in \N$, we have 
\[\dim(\tcovonly^{\leq\kmax})\leq 2{d+1 \choose \floor*{\frac{d+1}{2}}}.\]
\end{lemma}

\begin{proof}
We first observe that
\[\dim(\tdel{\infty}{k})\leq \dim(\del{\infty}{k})+\dim(\del{\infty}{k+1})\leq 2 {d+1 \choose \floor*{\frac{d+1}{2}}},\]
by noting that the dimension on the left is upper bounded by the sum of the~maximal 
number of order-$k$ Voronoi regions and order-$(k+1)$ Voronoi regions
that meet at a fixed point in $\R^d$. Since these two summands equal
$\dim(\del{\infty}{k})-1$ and $\dim(\del{\infty}{k+1})-1$, we have the first
inequality, and the second one follows from the previous Lemma~\ref{Lem:Dim_Bound_Fixed_k}.  Since each simplicial complex in $\tcovonly^{\leq\kmax}$ is a union of simplicial complexes $\tdel{r}{k}$, it follows that 
\[\dim(\tcovonly^{\leq\kmax})\leq 2{d+1 \choose \floor*{\frac{d+1}{2}}},\]
as desired.
\end{proof}

\begin{lemma}\label{lem:maximal_simplices}
If $n>d$, then for any $k\in \N$, the number of maximal simplices in $\tdel{\infty}{k}$ is at most $V_k+V_{k+1}$, where $V_k$, $V_{k+1}$ denote
the number of Voronoi vertices at level $k$ or $k+1$, respectively.
\end{lemma}
\begin{proof}
Recall that $\vor{k}$ and $\vor{k+1}$ denote the
order-$k$ and order-$(k+1)$ Voronoi diagrams, respectively.  Let $\vor{k+1/2}$ denote their \emph{overlay}, i.e., the polyhedral subdivision of $\R^n$ induced by the closed polyhedra of the form $R\cap S$, where $R$ and $S$ are Voronoi regions of order $k$ such that $\dim(R\cap S)=d$.
The subdivision $\vor{k+1/2}$ is called the \emph{Voronoi diagram of degree-$(k+1)$}~\cite{Edelsbrunner2018}.  It has been studied in~\cite{Edelsbrunner1986}. 

Maximal simplices of $\tdel{\infty}{k}$ correspond bijectively to cells in $\vor{k+1/2}$ with no proper faces.  We claim that the only such cells are vertices.  If $k\in \{0,n-1,n\}$, then either $\vor{k}$ or $\vor{k+1}$ contains only the region $\R^d$, and it is clear that the claim holds.  For $0< k<n-1$, we observe that since $A$ is in general position and $n>d$, each cell of either $\vor{k}$ or $\vor{k+1}$ is bounded by a vertex, which implies that each cell of $\vor{k+1/2}$ is bounded by a vertex.  The claim follows.

To finish the proof, we will show that the vertex set of $\vor{k+1/2}$ 
is the union of the vertex sets of $\vor{k}$ and $\vor{k+1}$.  
It suffices to show that for any cell $\sigma\subset \R^d$ of $\vor{k}$ with $\dim(\sigma)<d$, $\sigma$ is contained in a cell of $\vor{k+1}$; then $\vor{k+1/2}$ does
not subdivide the $(d-1)$-skeleton $\vor{k}$, implying that no new vertex is created.  

Let $x\in \sigma$.  As in Lemma \ref{Lem:Dim_Bound_Fixed_k}, we let 
\begin{align*}
s&=\min\, \{r\geq 0\mid x\in \cov{r}{k}\},\\
A_{on}&=\{p\in A\mid \|p-x\|=s\}, \\
A_{in}&=\{p\in A\mid \|p-x\|<s\},
\end{align*}
and we note that $|A_{in}|<k$.  Moreover, since $\dim(\sigma)<d$, we have $|A_{on}|\geq 2$.  
The set of order-$k$ Voronoi regions containing $x$ (and hence any $y\in \sigma$) is 
\[\{\vorreg{A_{in}\cup \tilde A}\mid \tilde A\subset A_{on},\ |\tilde A|=k-i\}.\]
Therefore $A_{on}$ and $A_{in}$ are independent of our choice of $x\in \sigma$.  Hence, for any $x\in \sigma$, the set of order-$(k+1)$ Voronoi regions containing $x$ is 
\[\{\vorreg{A_{in}\cup \tilde A}\mid \tilde A\subset A_{on},\ |\tilde A|=k+1-i\}.\]
The intersection of these Voronoi regions is the closure of a cell of $\vor{k+1}$ which contains $\sigma$.  (In fact, similar reasoning shows that  
if $|A_{in}|+|A_{on}|\geq k+2$, then $\sigma$ is a cell in $\vor{k+1}$, though this is not needed for the argument.)

\end{proof}

\begin{proof}[Proof of Proposition~\ref{Prop:Simplicial_Size_Bounds}]
Assume that $n>d$.  Since \[\tcovonly^{\leq\kmax}=\bigcup_{k=0}^{\kmax-1}\tdel{\infty}{k},\]
Lemma~\ref{lem:maximal_simplices} implies that the number of its maximal simplices is at most
\[(V_0+V_1)+(V_1+V_2)+\ldots + (V_{\kmax-1}+V_{\kmax})\leq 2 V_{\leq\kmax},\]
where $V_{\leq\kmax}$ is the number of Voronoi vertices at levels $\leq k$. 
Since the number of simplices of a $d$-dimensional simplicial complex with $m$
maximal simplices is at most $2^{d+1}m$, Lemma~\ref{Lem:Dim_Bound_S-Del} implies that the size of $\tcovonly^{\leq \kmax}$ is bounded above by 
\[
2^{\left(2{d+1 \choose \floor*{\frac{d+1}{2}}}+1\right)} \cdot V_{\leq\kmax}.
\]
Since the dimension bound only depends on the constant~$d$, it is therefore
enough to bound $V_{\leq\kmax}$.

By a result of Clarkson and Shor~\cite{clarkson1989applications},
we have that $e_{\leq\kmax}$, the number of Voronoi regions (i.e., $d$-dimensional Voronoi cells) at levels $\leq\kmax$ is
\[e_{\leq\kmax}=O(|A|^{ \floor*{\frac{d+1}{2}}} \kmax^{\ceil*{\frac{d+1}{2}}}).\]
This bound hides a constant that depends doubly exponentially in $d$.  
Moreover, by a counting argument appearing in~\cite[Theorem 3.3]{edelsbrunner1987algorithms}, we have that the total size of the Voronoi diagram
at level $k$ is bounded by
\[
O(\mathrm{max}\, \{e_i\mid k-d+1\leq i\leq k+d-2\})
\]
with $e_i$ the number of Voronoi regions at level $i$.
The same bound applies to $V_{k}$ as well. A simple calculation shows
that $V_{\leq \kmax}$ is then bounded by $(2d-2)e_{\leq \kmax}=O(e_{\leq\kmax})$.
\end{proof}



\section{The rhomboid bifiltration}
\label{sec:rhomboid}

\subsection{The rhomboid tiling}
Let  $\Xp \subseteq \R^d$ be a set of $n$ sites in general position, and
$S$ an arbitrary $(d-1)$-sphere in $\R^d$.
Then $S$ yields a decomposition $\Xp=\Xin \sqcup \Xon \sqcup \Xout$ 
with $\Xin$ the sites in the interior of $S$, $\Xon$ the sites on the sphere,
and $\Xout$ the sites in the exterior of $S$. We define the \emph{combinatorial rhomboid} of $S$ to be the collection of sets 
\begin{equation}
\vx{S} := \left\{\Xin \cup \qv \mid \qv \subseteq \Xon\right\}.
\label{eqn:newvx}
\end{equation}
We call elements of $\vx{S}$ \emph{combinatorial vertices}, and call
\begin{equation}
\rhomonly(\Xp) = \left\{\vx{S} \mid \text{$S$ is a sphere in $\R^d$}\right\}
\end{equation}
the \emph{(combinatorial) rhomboid tiling} of $\Xp$.  Elements of $\rhomonly(\Xp)$ are called \emph{rhomboids}.  Since $\Xp$ is fixed throughout,
we write $\rhomonly$ instead of $\rhomonly(\Xp)$.

As observed in \cite{Edelsbrunner2018}, the combinatorial rhomboid tiling can be geometrically
realized as a polyhedral cell complex~\cite[Def~2.38]{kozlov-book}. For that, a combinatorial vertex $\{a_1,\ldots,a_k\}$
(where $a_1,\ldots,a_k$ are sites in $\R^d$) is embedded as \linebreak $(\sum_{i=1}^k a_i,-k)$ in $\R^{d+1}$.
We call $k$ the \emph{depth} of the vertex. Embedding a \linebreak combinatorial rhomboid as the convex hull
of its embedded vertices yields an actual rhomboid in $\R^{d+1}$ whose dimension equals the cardinality of $\Xon$
in the corresponding partition of $\Xp$. The collection of these rhomboids is the \emph{(geometric) rhomboid tiling}
for $\Xp$.
We illustrate the construction in Figure~\ref{fig:rhomboids}. In what follows, we identify
vertices and rhomboids with their combinatorial description. In particular, we will
use $\rhomonly$ both for the combinatorial and the geometric rhomboid tiling.

\begin{figure}
        \centering   
        \includegraphics[width=0.7\textwidth,page=2]{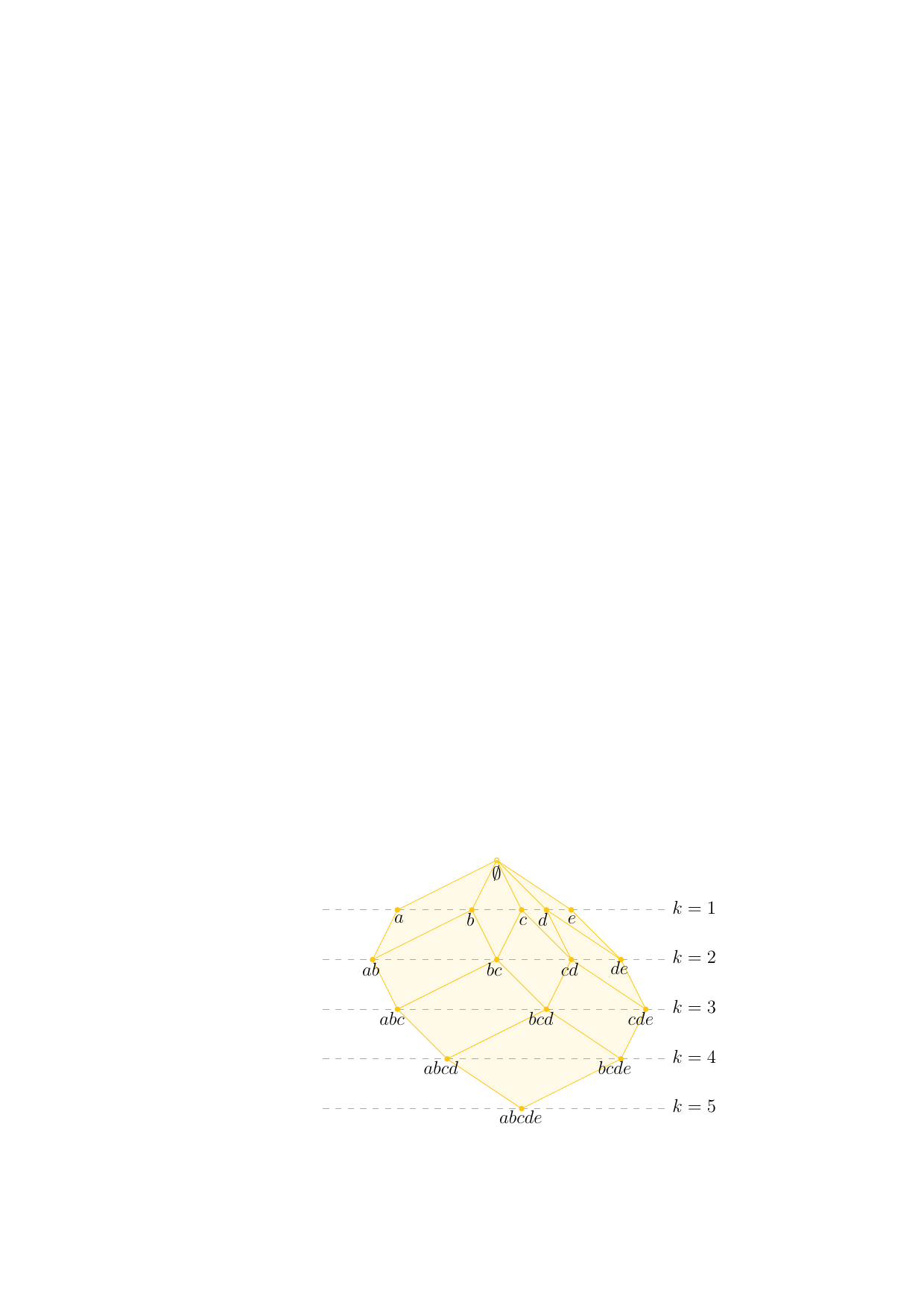}
        \caption{The rhomboid tiling of $5$ points on the real line.
            The highlighted 2-rhomboid $\rho$ defined by $\xin{\rho} = \{c\}$ and $\xon{\rho} = \{b,d\}$
            is the convex hull of the points $c$, $bc$, $cd$,
            and $bcd$, simplifying the labels here and, e.g., writing $bcd$ instead of the cell complex associated to $\{b,c,d\}$.
            The horizontal line at depth $k$ intersects the tiling in
            the order-$k$ Delaunay mosaic.}
        \label{fig:rhomboids}
\end{figure}

\begin{proposition}[{\cite[Proposition 4.8]{osang2021multi},\cite[Section 1.2]{edelsbrunner1987algorithms}}]\label{Prop:RT_Size}
The number of cells (of all dimensions) in $\rhomonly$ is at most $\frac{2^{d+1}}{(d+1)!}(n+1)^{d+1}\leq 2(n+1)^{d+1}$.
\end{proposition}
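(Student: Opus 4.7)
The plan is to reduce the count to a standard face-counting formula for a simple hyperplane arrangement in $\R^{d+1}$. First I would lift each site $a\in A$ to the affine hyperplane $h_a\subset\R^{d+1}$ tangent to the paraboloid $\{(x,z)\mid z=\|x\|^{2}\}$ at the point $(a,\|a\|^{2})$, and write $h_A:=\{h_a\mid a\in A\}$. By general position of $A$, the $|A|$ hyperplanes in $h_A$ are in general position in $\R^{d+1}$, so the arrangement $\arronly(h_A)$ is simple.

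The central step is to set up a bijection between $\rhomonly$ and the faces of $\arronly(h_A)$. The classical paraboloid lift gives a correspondence between points $p=(x,z)\in\R^{d+1}$ and spheres $S$ in $\R^d$, under which a site $a$ lies inside, on, or outside $S$ exactly when $p$ lies above, on, or below $h_a$. Consequently, each face $F$ of $\arronly(h_A)$ induces a tripartition $A=\Xin\sqcup\Xon\sqcup\Xout$ that is constant on $F$ and that coincides with the tripartition produced by any sphere $S$ realizing a point of $F$. The assignment $F\mapsto\vx{S}$, with $\vx{S}$ as in~(\ref{eqn:newvx}), is therefore well-defined, and because the arrangement is simple the dimension identity $\dim F + |\Xon| = d+1$ pairs each $(d+1-j)$-dimensional face with a single $j$-dimensional rhomboid. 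I would then verify that the map is injective (distinct faces give distinct tripartitions) and surjective (every combinatorial rhomboid arises from some sphere, hence from some face).

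Once the bijection is in hand, the proposition follows from the standard bound on the total number of faces (of all dimensions) in a simple arrangement of $n=|A|$ hyperplanes in $\R^{d+1}$. Including unbounded faces, this count is dominated by $2^{d+1}\binom{n+1}{d+1}$, which is at most $\frac{2^{d+1}}{(d+1)!}(|A|+1)^{d+1}$; since $2^{d+1}/(d+1)!\leq 2$ for every $d\geq 0$, the weaker bound $2(|A|+1)^{d+1}$ is immediate.

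The main obstacle is the bijection in the second step: one must handle edge cases carefully, in particular unbounded faces (which correspond to rhomboids of extremal depth) and faces for which $\Xon=\emptyset$ or $\Xin=\emptyset$, and confirm that the dimension identity really follows from simplicity rather than happening to fail on a low-dimensional stratum. The remaining face-counting step is routine from the combinatorics of hyperplane arrangements as treated in~\cite{edelsbrunner1987algorithms}.
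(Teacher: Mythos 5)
The paper does not give its own proof of Proposition~\ref{Prop:RT_Size}; it simply cites \cite[Proposition~4.8]{osang2021multi} and \cite[Section~1.2]{edelsbrunner1987algorithms}. Your sketch reconstructs the argument those sources use: lift the sites to the hyperplanes tangent to the paraboloid, identify each combinatorial rhomboid with the face of the simple arrangement $\arronly(h_A)$ carrying the same tripartition $\Xin\sqcup\Xon\sqcup\Xout$, and count faces. The bijection worries you raise are real but all resolve cleanly: the set of points with a fixed sign vector is convex, hence connected, so each tripartition is realized by exactly one face; every face of the arrangement meets the closed region on or below the paraboloid (the paraboloid is the upper envelope of \emph{all} of its tangent hyperplanes, so even the ``top'' cell dips below it), hence every face corresponds to an actual sphere; and in a simple arrangement the codimension of a face equals $|\Xon|$, giving $\dim F = d+1-|\Xon|$ while $\dim\vx{S}=|\Xon|$.

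Two small corrections. First, your orientation is reversed: with the standard lift $(c,\|c\|^2-r^2)$, a site $a$ lies \emph{inside} $S$ exactly when the lifted point lies \emph{below} $h_a$, not above. This is cosmetic and does not affect the count. Second, the intermediate bound $2^{d+1}\binom{|A|+1}{d+1}$ on the number of faces of a simple arrangement of $n=|A|$ hyperplanes in $\R^{d+1}$ fails when $n\le d$ (for instance $n=1$, $d=2$ gives a count of $3$ against a bound of $0$). The exact count is $\sum_{j=0}^{d+1}2^j\binom{n}{j}$, and one can pass directly from this to $\frac{2^{d+1}}{(d+1)!}(n+1)^{d+1}$ without going through $\binom{n+1}{d+1}$. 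Since sites in general position spanning $\R^d$ have $n\ge d+1$ anyway, this does not matter in the end, but the chain of inequalities as written is not quite right for all $n$.
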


For any rhomboid $\rho\in\rhomonly$, we let 
\[r_\rho=\inf\, \{r \mid \textup{there exists a sphere $S$ of radius $r$ such that $\vx{S}=\rho$}\}.\] It is easily checked that if $\rho'$ is a subset of $\rho$, then $r_{\rho'}\leq r_{\rho}$,
and therefore, for any $r\geq 0$, the sublevel set
$\rhomonly_{r}=\{\rho\in\rhomonly\mid r_{\rho}\leq r\}$
is also a polyhedral complex.

\subsection{Slicing}
Next, we slice the rhomboid tiling by cutting every rhomboid along the \linebreak hyperplanes $\{x \in \R^{d+1} \mid -x_{d+1}=k\}$
with $k=0,\ldots,n$.
In this way, a rhomboid decomposes into its intersections with these hyperplanes and with slabs of the form $\{x \in \R^{d+1} \mid k\leq -x_{d+1}\leq k+1\}$.
The resulting polyhedra again form a polyhedral complex 
that we call the \emph{sliced rhomboid tiling} $\trhomonly$.
We refer to its cells as \emph{sliced rhomboids}.  

For a sliced rhomboid $\rho$, we define $k_\rho$ as the minimum depth among its vertices. Moreover, there is a unique (unsliced) rhomboid $\rho'$ of smallest dimension
that contains $\rho$, and we define $r_\rho:=r_{\rho'}$. Define
\[
\trhom{r}{k}:=\{\rho\in\trhomonly\mid r_{\rho}\leq r, k_\rho\geq k\}
\]
and observe that for $r\leq r'$ and $k\geq k'$, we have $\trhom{r}{k}\subseteq\trhom{r'}{k'}$. Hence, $(\trhom{r}{k})_{(r,k)\in [0,\infty)\times\N^{\mathrm{op}}}$
is a bifiltration of combinatorial cell \linebreak complexes. Again, we will abuse notation and use the symbol $\trhomonly$ both for the sliced rhomboid tiling
and the bifiltration $(\trhom{r}{k})_{(r,k)\in [0,\infty)\times\N^{\mathrm{op}}}$.   As shown in~\cite{Edelsbrunner2018}, the restriction of   $\trhomonly$ to cells in the hyperplane $-x_{d+1}=k$ is the order-$k$ Delaunay mosaic, i.e., the geometric dual of the order-$k$ Voronoi diagram.

\subsection{Comparison of S-Rhomb and S-Del} 

The next lemma establishes a close relationship between the bifiltrations $\trhomonly$ and $\tcovonly$.  It is closely related to \cite[Theorem 1 and Lemma 2]{Edelsbrunner2018}.

\begin{lemma}\mbox{}
\label{lem:correspondence}
For all $(r,k)\in [0,\infty)\times\N$,
\begin{enumerate}
\item The vertex sets of $\trhom{r}{k}$ and $\tcov{r}{k}$ are equal.
\item The vertices of each sliced rhomboid in $\trhom{r}{k}$ span a simplex in $\tcov{r}{k}$.
\item The vertices of each simplex in $\tcov{r}{k}$ are contained in a sliced rhomboid of $\trhom{r}{k}$.
\end{enumerate}
\end{lemma}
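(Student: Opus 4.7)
My plan is to establish the three claims sequentially, with the center of a sphere defining a rhomboid acting as the bridge between the rhomboid and cover pictures.

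For (1), I first observe that slicing introduces no new vertices: the combinatorial vertices of $\rhomonly$ sit at integer depths (equal to their cardinalities), and the cutting hyperplanes $\{-x_{d+1}=k\}$ lie at these same integer depths, so every edge of $\rhomonly$ (spanning exactly one unit of depth) meets the slicing hyperplanes only at its endpoints. Hence the vertex sets of $\trhomonly$ and $\rhomonly$ coincide. A subset $\tilde A\subset A$ is therefore a vertex of $\trhom{r}{k}$ iff $|\tilde A|\geq k$ and the infimal radius $r_\rho$ for the singleton rhomboid $\rho=\{\tilde A\}$ satisfies $r_\rho\leq r$, while a vertex of $\tcov{r}{k}=\bigcup_{i\geq k}\tdel{r}{i}$ is a non-empty region in some $\vorcov{r}{i}\cup\vorcov{r}{i+1}$ with $i\geq k$, i.e., a subset $\tilde A$ of size $\geq k$ with $\covreg{r}{\tilde A}\cap\vorreg{\tilde A}\neq\emptyset$. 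The condition $r_\rho\leq r$ unfolds to the existence of a point $b$ with $\max_{a\in\tilde A}\|b-a\|\leq r$ and $\max_{a\in\tilde A}\|b-a\|\leq\min_{a'\in A\setminus\tilde A}\|b-a'\|$ (the center of a weakly separating sphere of radius at most $r$), which is precisely $b\in\covreg{r}{\tilde A}\cap\vorreg{\tilde A}$, proving (1).

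For (2), let $\sigma$ be a sliced rhomboid in $\trhom{r}{k}$ and let $\rho=\vx{S}$ denote the minimal unsliced rhomboid containing it, so $r_\rho\leq r$ and every vertex of $\sigma$ is of the form $\Xin\cup\qv$ with $\qv\subseteq\Xon$ and depth in $\{k_\sigma,k_\sigma+1\}$. I pick $b$ to be the center of a sphere of radius $r_S\leq r$ realizing $\vx{S}=\rho$, for which $\|b-a\|<r_S$ on $\Xin$, $\|b-a\|=r_S$ on $\Xon$, and $\|b-a\|>r_S$ on $\Xout$. A direct check then gives $\max_{a\in\tilde A}\|b-a\|\leq r_S\leq r$ and $\max_{a\in\tilde A}\|b-a\|\leq\min_{a'\in A\setminus\tilde A}\|b-a'\|$ for every vertex $\tilde A$ of $\sigma$, so $b$ lies in every $\covreg{r}{\tilde A}\cap\vorreg{\tilde A}$ and the vertices of $\sigma$ span a simplex in $\tdel{r}{k_\sigma}\subseteq\tcov{r}{k}$.

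For (3), consider a simplex $\{\tilde A_1,\dots,\tilde A_m\}$ in $\tcov{r}{k}$. Setting $\ell:=\min_i|\tilde A_i|\geq k$, all sizes $|\tilde A_i|$ lie in $\{\ell,\ell+1\}$ and the simplex lies in $\tdel{r}{\ell}$. Pick $b\in\bigcap_i\covreg{r}{\tilde A_i}\cap\vorreg{\tilde A_i}$ and set $R:=\max_i\max_{a\in\tilde A_i}\|b-a\|\leq r$. Let $S$ be the sphere of radius $R$ centered at $b$, with partition $A=\Xin\sqcup\Xon\sqcup\Xout$. The goal is to show each $\tilde A_i$ equals $\Xin\cup\qv_i$ for some $\qv_i\subseteq\Xon$. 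The inclusion $\tilde A_i\subseteq\Xin\cup\Xon$ is immediate from the definition of $R$. Once the reverse inclusion $\Xin\subseteq\tilde A_i$ is in hand, the rhomboid $\vx{S}$ has $r_{\vx{S}}\leq R\leq r$ and contains each $\tilde A_i$ as a vertex at depth $\ell$ or $\ell+1$; its intersection with the slab $\ell\leq -x_{d+1}\leq\ell+1$ is then a sliced rhomboid in $\trhom{r}{k}$ whose vertex set contains every $\tilde A_i$.

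The main obstacle is the reverse inclusion $\Xin\subseteq\tilde A_i$, which requires combining the Voronoi conditions at $b$ across different $\tilde A_j$. I argue by contradiction: suppose $a\in\Xin$ with $a\notin\tilde A_i$. The Voronoi condition for $\tilde A_i$ then gives $\max_{a'\in\tilde A_i}\|b-a'\|\leq\|b-a\|<R$, so $R$ must be realized by some $a^*\in\tilde A_j$ for an index $j\neq i$; the Voronoi condition for $\tilde A_i$ forces $a^*\notin\tilde A_i$, while the Voronoi condition for $\tilde A_j$ forces $a\in\tilde A_j$, whence $|\tilde A_j\setminus\tilde A_i|\geq 2$. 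Since the sizes of $\tilde A_i$ and $\tilde A_j$ differ by at most $1$, this forces $|\tilde A_i\setminus\tilde A_j|\geq 1$. Any $a^{**}$ in this difference then satisfies $\|b-a^{**}\|<R$ by the Voronoi condition for $\tilde A_i$ and $\|b-a^{**}\|\geq R$ by that of $\tilde A_j$, the desired contradiction.
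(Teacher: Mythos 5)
Your proof is correct and follows the same overall strategy as the paper: use the center of a sphere as the bridge between the rhomboid and Voronoi/cover pictures, with the key sphere in part (3) being the one through the farthest relevant site (which coincides with the paper's ``smallest sphere containing $k'$ sites,'' where $k'$ is the maximum depth). The one place you diverge in substance is the verification that $\Xin \subseteq \tilde A_i$ in part (3): the paper argues by a short cardinality count (if some $a \in \Xin\setminus \tilde A_i$, then all of $\tilde A_i$ lies strictly inside $S$, and together with $a$ this contradicts $|\Xin| \le k'-1$, which holds because at least one site of $\tilde A_{j^*}$ sits on $S$ and $\Xin \subseteq \tilde A_{j^*}\setminus\{a^*\}$); you instead run a pairwise combinatorial chase between $\tilde A_i$ and $\tilde A_j$ via $a^*$ and $a^{**}$. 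Both work; the paper's cardinality argument is a bit shorter and more direct, while yours avoids any explicit appeal to a bound on $|\Xin|$. Your part (1) also spells out details the paper relegates to ``a slightly more careful analysis,'' which is a welcome addition. One small caveat you share with the paper: in part (2) both proofs casually ``pick a sphere of radius $\leq r$'' realizing $\rho$, which implicitly assumes the infimum defining $r_\rho$ is attained or can be replaced by a limiting argument; this does not affect the correctness of the approach but is worth noting.
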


\begin{proof}
For the first part, note that a set of sites $v=\{a_1,\ldots,a_{k'}\}$ is a vertex in $\trhom{r}{k}$ if and only if $k'\geq k$ and for all $r'>r$ there is a sphere $S$ of radius at most $r'$ whose associated decomposition $\Xp=\Xin \sqcup \Xon \sqcup \Xout$ satisfies 
\begin{equation}\label{Eq:Vertex}
\Xin\subseteq v\subseteq\Xin\cup\Xon.
\end{equation}  Note that \eqref{Eq:Vertex} holds if and only if the center of $S$ has $v$ among its $k'$ closest sites, which is equivalent to the condition that the order-$k'$
Voronoi region $\vorreg{v}$ contains the center of $S$.  Thus, such a sphere $S$ exists if and only $v$ is a vertex of $\del{r'}{k'}$.  
Thus, $v\in \trhom{r}{k}$ if and only if $k'\geq k$ and $v\in \del{r'}{k'}$ for all $r'>r$.  But the latter holds if and only if $v\in \tcov{r}{k}$ because, first, the vertex sets of $\tcov{r'}{k}$ and $\sqcup_{k'\geq k} \del{r'}{k'}$ are equal and, second,  $v\in \tcov{r'}{k}$ for all $r'>r$ implies $v\in \tcov{r}{k}$.  Thus, the vertex sets of $\trhom{r}{k}$ and $\tcov{r}{k}$ are equal, as claimed.

For the second part, let $v_1,\ldots,v_m$ denote the vertices of a sliced rhomboid in $\trhom{r}{k}$.
Let $\rho$ denote the smallest rhomboid of $\rhomonly$ containing this sliced rhomboid.  For any $r'>r$, there exists a sphere $S$ of radius at most $r'$ with $\rho_S=\rho$. 
Let $x$ denote the center of $S$ and $\Xp=\Xin \sqcup \Xon \sqcup \Xout$ be the decomposition with respect to $S$.
Now, each $v_i$ is the union of $\Xin$ with a subset of $\Xon$, and hence, the point
$x$ belongs to the Voronoi region of $v_i$. Since $i$ is arbitrary, it follows that all Voronoi
regions intersect, and $x$ has distance $\leq r'$ to each site in each $v_i$, so $v_1,\ldots,v_m$ span a simplex in $\tcov{r'}{k}$.  Since this holds for all $r'>r$, this simplex is also contained in $\tcov{r}{k}$.

For the third part, consider vertices $v_1,\ldots,v_m$ that span a simplex in $\tcov{r}{k}$.
Assume that some $v_i$ has order $k'\geq k$ and that the remaining vertices are of order $k'$ or $k'-1$.
Since $v_1,\ldots,v_m$ span a simplex, the corresponding higher-order Voronoi regions, intersected with balls of radius $r$ around the involved sites, have non-empty intersection.  Let $x$ be a point in this intersection.  

If $k'=0$, then $m=1$ and $v_1=\emptyset$.  As $\{\emptyset\}$ is a combinatorial rhomboid of dimension 0, the desired result holds in this case.  If $k'=1$ and $x$ is a site, then either $m=1$ and $v_1=\{x\}$, or else $m=2$, in which case $\{v_1,v_2\}=\{\{x\},\emptyset\}$.  In either case, the desired result again holds.  

Otherwise, there is smallest sphere $S$ centered at $x$ that includes $k'$ sites (either on the sphere or in its interior). $S$ induces a partition $\Xp=\Xin \sqcup \Xon \sqcup \Xout$ and a rhomboid $\rho$.
Each vertex $v_i$ of order $k'$ must contain all sites of $\Xin$, and some subset of the sites of $\Xon$, meaning that $v_i$ is in $\rho$.
Furthermore, at least one site lies on $S$; otherwise, there would be a smaller sphere.
This implies that each vertex $v_j$ of order $k'-1$ also has to contain all sites of $\Xin$ and some subset of $\Xon$.
Thus, all vertices lie in $\rho$, and in particular in its $(k'-1,k')$-slice.
Finally, observe that because one of the sites lies on $S$, the radius of $S$ is the distance
of that site to $x$, which is at most $r$.
\end{proof}

\begin{remark}\label{Rem:Injection}
Parts 1 and 2 of Lemma~\ref{lem:correspondence} establish that we have a vertex-preserving injection $\mathcal J$ from the cells of $\trhom{r}{k}$ to the simplices of $\tcov{r}{k}$.  Moreover, the third part of Lemma~\ref{lem:correspondence} implies that $\mathcal J$ restricts to a bijection from the maximal cells of $\trhom{r}{k}$ to the maximal simplices of $\tcov{r}{k}$.  Figure~\ref{fig:diff} illustrates that $\mathcal J$ itself needn't be a bijection.

We note that $\mathcal J$ does not preserve dimension:  For $\nu$ a cell of $\trhomonly$ spanned by vertices of cardinality $k$  (i.e., a cell in the order-$k$ Delaunay mosaic), we have $\dim(\nu)\leq d$.  But if $d\geq 3$, it can be that $\dim(\mathcal J(\nu))>d$, even when the sites are in general position. For a cell $\nu$ of $\trhom{r}{k}$ spanned by vertices of cardinality $k$ and $k+1$, we may have $\dim(\nu)\ne \dim(\mathcal J(\nu))$ even for $d=2$, see Figure~\ref{fig:diff}.
\end{remark}
\begin{figure}
        \centering  
        \includegraphics[width=0.8\textwidth]{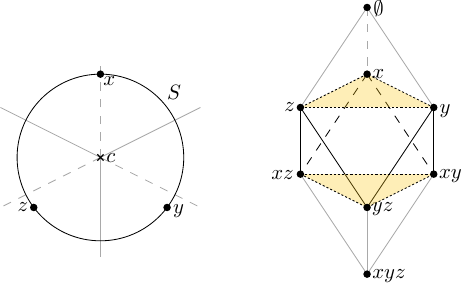}
        \caption{An illustration of the difference between $\tcovonly$ and $\trhomonly$ for three points in the plane.
        The order-$1$ Voronoi regions of the points $\{x\}$, $\{y\}$, and $\{z\}$ intersect in $c$, as do the order-$2$ Voronoi regions of $\{x,y\}$, $\{y,z\}$, and $\{x,z\}$. Consequently, $\tcovonly$ contains a $5$-simplex, but the corresponding cell in $\trhomonly$ on the same vertex set is a $3$-dimensional triangular skew prism.  Many simplices in $\tcovonly$ do not correspond to any cell in $\trhomonly$, e.g., the 1-simplex $\{x,yz\}$.}
        \label{fig:diff}
\end{figure}

%

\begin{theorem}
\label{thm:trhom_tcov}
The bifiltrations $\trhomonly$ and $\tcovonly$ are weakly equivalent.
\end{theorem}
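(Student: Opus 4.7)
My plan is to construct a morphism of bifiltrations $\phi\colon\tcovonly\to\trhomonly$ that is an objectwise homotopy equivalence, which directly witnesses the weak equivalence. The construction is geometric and uses the correspondence $\mathcal{J}$ of \Cref{Rem:Injection}.

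I would define $\phi_{r,k}$ as follows. By \Cref{lem:correspondence}(1), every vertex $v=\{a_1,\ldots,a_j\}\subseteq A$ of $\tcov{r}{k}$ is also a vertex of $\trhom{r}{k}$, with canonical embedding in $\R^{d+1}$ at $(\sum_{i=1}^{j} a_i,-j)$ inherited from the rhomboid tiling. Send each vertex of $\tcov{r}{k}$ to its embedded point and extend affinely on each simplex. The resulting $\phi_{r,k}$ is continuous and well defined (affine extensions agree on shared faces because they agree on the shared vertices), and the definition depends only on the fixed vertex embedding, so the maps $\phi_{r,k}$ commute with every bifiltration inclusion and assemble into a morphism $\phi$. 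For each maximal simplex $\sigma$ of $\tcov{r}{k}$, \Cref{Rem:Injection} pairs $\sigma$ with the maximal cell $\nu=\mathcal{J}^{-1}(\sigma)$ of $\trhom{r}{k}$ having the same vertex set, and $\phi_{r,k}(\sigma)$ equals the convex hull of these vertices in $\R^{d+1}$, which is exactly $\nu$. Hence $\phi_{r,k}$ surjects onto $\trhom{r}{k}=\bigcup_{\nu}\nu$.

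To verify that each $\phi_{r,k}$ is a homotopy equivalence, I would apply \Cref{thm:persistentnerve} over the one-point poset. Consider the good covers $\mathcal{U}$ of $\tcov{r}{k}$ by its closed maximal simplices and $\mathcal{V}$ of $\trhom{r}{k}$ by its closed maximal cells; both consist of compact convex sets. The bijection $\mathcal{J}$ restricted to maximal simplices induces a vertex-preserving bijection between $\mathcal{U}$ and $\mathcal{V}$. In both a simplicial complex and the polytopal complex $\rhomonly$, the intersection of any collection of cells equals the face spanned by their common vertices; hence the nerves $\nerve(\mathcal{U})$ and $\nerve(\mathcal{V})$ are canonically isomorphic. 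Moreover, $\phi_{r,k}$ restricts on each element of $\mathcal{U}$, and on each multi-way intersection, to an affine surjection between contractible convex polytopes, and is therefore a homotopy equivalence piecewise. Applying \Cref{thm:persistentnerve} to each cover produces homotopy equivalences $\tcov{r}{k}\simeq\deltaspaceonly(\mathcal{U})\simeq\nerve(\mathcal{U})$ and similarly for $\trhom{r}{k}$; the cellwise homotopy equivalences together with the nerve isomorphism induce a map $\deltaspaceonly(\mathcal{U})\to\deltaspaceonly(\mathcal{V})$ that is itself a homotopy equivalence, and a standard two-out-of-three argument then forces $\phi_{r,k}$ to be a homotopy equivalence.

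The main obstacle I expect is this last step, namely promoting the cellwise equivalences of $\phi_{r,k}$ to a global one. The cleanest way I know is to invoke the intermediate space $\deltaspaceonly$ from \Cref{thm:persistentnerve}, which is a homotopy colimit of the diagram of cover pieces indexed by the nerve; the cellwise equivalences and the nerve isomorphism yield a homotopy equivalence between the two homotopy colimits, and the $\deltaspaceonly$-equivalences pin this down to force $\phi_{r,k}$ to be an equivalence. A minor but nontrivial subsidiary fact I would record is that cells of the rhomboid tiling meet only along the face spanned by their common vertices, which follows from the regular polytopal structure of $\rhomonly$ established in~\cite{Edelsbrunner2018,osang2021multi}.
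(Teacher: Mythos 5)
Your proposal shares the essential machinery with the paper's proof---both hinge on Lemma~\ref{lem:correspondence}/Remark~\ref{Rem:Injection} and on applying the Persistent Nerve Theorem to cell-based covers of $\trhomonly$ and $\tcovonly$, with the key observation being a vertex-set-preserving correspondence between cover elements that identifies the two nerves. Where you diverge is in the overall strategy: the paper is content to exhibit a zigzag of objectwise equivalences
$\trhomonly \leftarrow \Delta(\mathcal{U}) \to \nerve(\mathcal{U}) \cong \nerve(\mathcal{V}) \leftarrow \Delta(\mathcal{V}) \to \tcovonly$, which is exactly what the definition of weak equivalence requires, and then it stops. You instead construct an explicit objectwise morphism $\phi\colon\tcovonly\to\trhomonly$ by sending each vertex to its lift in $\R^{d+1}$ and extending affinely, and try to prove $\phi$ itself is an objectwise homotopy equivalence. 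This is a stronger conclusion than the theorem asks for, and getting it costs you.

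The cost is precisely the step you flagged. To run your two-out-of-three argument you need (i) that the intermediate space $\Delta$ of \Cref{thm:persistentnerve} is functorial in the data $(X,\mathcal{U})$ so that $\phi$ and the cover bijection induce a map $\Delta(\mathcal{U})\to\Delta(\mathcal{V})$ making the relevant square commute; and (ii) a gluing/induction principle (e.g.\ for homotopy colimits) asserting that a pointwise equivalence of cover diagrams over isomorphic nerves gives an equivalence of the $\Delta$-spaces. Both are true if one adopts the homotopy-colimit model of $\Delta$ mentioned after \Cref{thm:persistentnerve}, but neither is literally part of the statement as given, so some additional citation or argument is needed to close the gap. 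A second, smaller point: you need the intersections of cells in $\trhomonly$ to coincide with the convex hulls of the shared vertices, not merely that nonemptiness of intersections is detected by shared vertices (the weaker fact is all the paper invokes). This stronger fact is correct for the rhomboid/sliced-rhomboid complex, but it is worth isolating and justifying from the combinatorics of rhomboids rather than attributing it to an unspecified ``regular polytopal structure.'' Also note that you cover by \emph{maximal} cells/simplices while the paper covers by \emph{all} cells; both are good covers and both yield isomorphic nerves under the correspondence, but the nerves differ between your version and the paper's, which is harmless but worth keeping in mind if you compare notes.

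In short: your route is correct in outline and proves something a bit stronger, but it carries nontrivial extra homotopy-theoretic baggage (functoriality of $\Delta$, a gluing lemma) that the paper deliberately sidesteps by settling for the zigzag. If your goal is only the stated theorem, the paper's shorter argument is the cleaner one; if you want the explicit comparison map $\phi$, you should spell out the naturality of $\Delta$ or replace it with a direct gluing-lemma argument.
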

\begin{proof}
We define good covers 
of both bifiltrations: First, for $\trhom{r}{k}$, we choose the cover that
consists of all its cells. This is a good cover because the cells are convex. The collection
of these covers over all choices of $r$ and $k$ yields a good cover $\mathcal{U}$ of the bifiltration $\trhomonly$, and the Persistent Nerve Theorem then gives objectwise homotopy equivalences
\[
\begin{tikzcd}
 \trhomonly  &  \arrow[labels=above]{l}{\simeq} \arrow{r}{\simeq} \deltaspace{\mathcal{U}}{\trhomonly}  & \nerve(\mathcal{U}).
\end{tikzcd}
\]
for some intermediate bifiltration $\deltaspace{\mathcal{U}}{\trhomonly}$.

Moreover, we obtain a cover $\mathcal{V}_{r,k}$ of $\tcov{r}{k}$ whose elements are the simp\-lices
spanned by the vertices of the sliced rhomboids in $\trhom{r}{k}$.
By the second part of Lemma~\ref{lem:correspondence},
these cover elements indeed exist. Moreover, in view of Remark~\ref{Rem:Injection}, every maximal simplex of  $\tcov{r}{k}$ is an element of $\mathcal{V}_{r,k}$.  We thus obtain a good cover $\mathcal{V}$ of $\tcovonly$.  Applying the Persistent Nerve Theorem again, we obtain objectwise homotopy equivalences
\[
\begin{tikzcd}
 \tcovonly  &  \arrow[labels=above]{l}{\simeq} \arrow{r}{\simeq} \deltaspace{\mathcal{V}}{\tcovonly}  & \nerve(\mathcal{V}).
\end{tikzcd}
\]

Finally, $\nerve(\mathcal{U})$ and $\nerve(\mathcal{V})$ are isomorphic: The elements of $\mathcal{U}$ and of~$\mathcal{V}$ are in 1-to-1 correspondence, with corresponding cover elements having the same vertex set. In both cases, an intersection of cover elements is non-empty if and only if the elements share a vertex, which is determined
by their vertex sets. Hence, we have objectwise homotopy equivalences
\[
\begin{tikzcd}
 \trhomonly  &  \arrow[labels=above]{l}{\simeq} \arrow{r}{\simeq} \deltaspace{\mathcal{U}}{\trhomonly}  & \nerve(\mathcal{U}) \arrow["\cong"]{r} & \nerve(\mathcal{V}) &  \arrow[labels=above]{l}{\simeq} \arrow{r}{\simeq} \deltaspace{\mathcal{V}}{\tcovonly}  & \tcovonly.
\end{tikzcd}
\]
\end{proof}

\subsection{Unslicing the rhomboid}
Next, we define a bifiltration on the (unsliced) rhomboid tiling. Recall that for a rhomboid $\rho$, 
we already defined $r_\rho$ as the radius of the smallest sphere that gives rise to that rhomboid.
As in the sliced version, we define $k_\rho$ as the minimal depth among the vertices of $\rho$,
and \[\Rhomboid{r}{k}:=\{\rho\mid r_\rho\leq r, k_\rho\geq k\}.\]
This yields a bifiltration $(\Rhomboid{r}{k})_{(r,k)\in [0,\infty)\times\N^{\mathrm{op}}}$, 
which we denote by $\rhomonly$.

\begin{lemma}
The bifiltrations $\rhomonly$ and $\trhomonly$ are weakly equivalent.
\end{lemma}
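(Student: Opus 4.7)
The approach is to show the natural inclusion $\iota:\rhomonly\to\trhomonly$ of bifiltrations is an objectwise homotopy equivalence, which by definition gives weak equivalence via a trivial zigzag. For each $(r,k)$, the subset relation $|\Rhomboid{r}{k}|\subseteq|\trhom{r}{k}|$ holds inside $\R^{d+1}$: any rhomboid $\rho$ with $k_\rho\geq k$ lies in $\{-x_{d+1}\geq k\}$ and decomposes into sliced rhomboids each satisfying $k_\sigma\geq k_\rho\geq k$ and $r_\sigma\leq r_\rho\leq r$, and hence lying in $\trhom{r}{k}$. These inclusions commute with the internal maps of the two bifiltrations, giving the morphism $\iota$; what remains is to show each component $\iota_{r,k}$ is a homotopy equivalence.

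Since $|\trhom{r}{k}|=|\rhomonly_r|\cap\{-x_{d+1}\geq k\}$, the excess of $|\trhom{r}{k}|$ over $|\Rhomboid{r}{k}|$ is a union of ``truncated tops'' $\rho^{\geq k}:=\rho\cap\{-x_{d+1}\geq k\}$, taken over rhomboids $\rho$ with $r_\rho\leq r$ that \emph{straddle} the hyperplane $\{-x_{d+1}=k\}$, i.e.\ with $k_\rho<k\leq k_\rho+|\xon{\rho}|$. I would parameterize such a $\rho$ via its Minkowski-sum description by the unit cube $[0,1]^m$, writing $m:=|\xon{\rho}|$ and $m':=k-k_\rho\in\{1,\ldots,m\}$. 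Under this identification $\rho^{\geq k}$ corresponds to $\{t\in[0,1]^m:\sum_i t_i\geq m'\}$, while $\rho\cap|\Rhomboid{r}{k}|$ corresponds to $L_\rho:=\{t\in[0,1]^m:\text{at least }m'\text{ coordinates of }t\text{ equal }1\}$, since a face of $\rho$ obtained by fixing coordinates in $J_1$ to $1$ and $J_0$ to $0$ lies in $\Rhomboid{r}{k}$ iff $|J_1|\geq m'$.

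For each straddling rhomboid I would define the local retraction $\phi_\rho:\rho^{\geq k}\to L_\rho$ by $\phi_\rho(t)_i:=\min(1,\,t_i/t^{(m')})$, where $t^{(m')}$ is the $m'$-th largest coordinate of $t$; the constraint $\sum t_i\geq m'$ forces $t^{(m')}>0$, so the formula is well-defined. Routine checks yield that $\phi_\rho$ is continuous (order statistics are continuous in $t$ and the two branches of the $\min$ agree at ties), lands in $L_\rho$, restricts to the identity on $L_\rho$, and satisfies $\phi_\rho(t)\geq t$ coordinatewise. The straight-line homotopy $H_\rho(t,s):=(1-s)t+s\phi_\rho(t)$ therefore remains inside $\rho^{\geq k}$ (each coordinate stays in $[t_i,\phi_\rho(t)_i]\subseteq[0,1]$ and the sum stays $\geq m'$) and exhibits $L_\rho$ as a strong deformation retract of $\rho^{\geq k}$.

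The main obstacle is making these local retractions agree on shared faces of the polyhedral complex, so that they glue into a continuous global map on $|\trhom{r}{k}|$; this is precisely why the order-statistic formula is essential rather than some ad hoc rounding. If $F\subseteq\rho$ is a face fixing coordinates in $J_1$ to $1$ and $J_0$ to $0$, the $|J_1|$ pinned coordinates are automatically the top $|J_1|$ order statistics of any $t\in F$, so $t^{(m')}$ evaluated on $\rho$ coincides with the $(m'-|J_1|)$-th largest free coordinate of $F$; a direct check then gives $\phi_\rho|_F=\phi_F$ (interpreted as the identity when $m'-|J_1|\leq 0$, corresponding to $F\in\Rhomboid{r}{k}$). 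The $\phi_\rho$ therefore glue with the identity retraction on rhomboids already in $\Rhomboid{r}{k}$ to produce a global strong deformation retraction of $|\trhom{r}{k}|$ onto $|\Rhomboid{r}{k}|$, making each $\iota_{r,k}$ a homotopy equivalence and $\iota$ the required objectwise homotopy equivalence.
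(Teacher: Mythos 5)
Your proof is correct, and it takes a genuinely different route from the paper's. Both arguments establish that the inclusion $\Rhomboid{r}{k}\hookrightarrow\trhom{r}{k}$ is a homotopy equivalence by exhibiting a deformation retraction, and both observe that these inclusions assemble into an objectwise homotopy equivalence of bifiltrations. But the paper argues inductively by dimension: it asserts a local ``push-in'' retraction on each top-dimensional dangling rhomboid that fixes the boundary, performs these in parallel, and then recurses to the next-lower dimension. No explicit formula is given, and face-compatibility is handled implicitly by the dimension-ordered sequencing. Your proof instead constructs a single, global, explicit strong deformation retraction: on the unit-cube parameterization $[0,1]^m$ of each straddling rhomboid you set $\phi_\rho(t)_i=\min\bigl(1,\,t_i/t^{(m')}\bigr)$, and the decisive observation is that this order-statistic formula restricts correctly to faces (pinning coordinates to $1$ or $0$ simply shifts which order statistic is relevant), so the local maps glue into a continuous map on $|\trhom{r,k}|$ without any inductive shelling argument. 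What each approach buys: the paper's is shorter and more conceptual but leaves the compatibility check to the reader via the dimension ordering, whereas yours is fully explicit and makes face-compatibility a concrete calculation, at the cost of the Minkowski-cube bookkeeping. One small imprecision in your write-up: the parenthetical ``the two branches of the $\min$ agree at ties'' is not really the relevant point for continuity (the minimum of two continuous functions is automatically continuous); what matters is that $t\mapsto t^{(m')}$ is continuous and stays positive on $\rho^{\geq k}$, which you do establish.
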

\begin{proof}
For a rhomboid $\rho$ in $\rhomonly$,
set $\altkmin$ as the minimum depth and $\altkmax$ as the maximum depth among the vertices in $\rho$.  Note that $k_\rho=\altkmin$.
For $r$ and $k'$ fixed, we say $\rho$ is \emph{dangling}
if $r_\rho\leq r$ and $\altkmin < k' < \altkmax$.  If $\rho$ is dangling then $\rho\not\in \Rhomboid{r}{k}$, but some of the slices of $\rho$ 
are contained in $\trhom{r}{k}$. In fact, all cells of $\trhom{r}{k}$ not contained in (the geometric realization of) $\Rhomboid{r}{k}$ are of this form. 
For example, taking $k=2$ and $r$ very large, the shaded rhomboid $\{c,bc,cd,bcd\}$ of Figure \ref{fig:rhomboids} is dangling.  $\trhom{r}{2}$ contains the cell $\{bc,cd,bcd\}$ but $\Rhomboid{r}{2}$ does not.

Observe that there is a deformation retraction of $\trhom{r}{k}$ onto $\Rhomboid{r}{k}$
which, for each dangling rhomboid $\rho$, ``pushes" \[\rho\cap \{x\in \R^{d+1}\mid x_{d+1}\geq k\}\] onto the boundary of $\rho$;  
for instance, in the example above $\{bc,cd,bcd\}$ is pushed onto $\{bc,bcd\}\cup \{cd,bcd\}$.
Thus, for every choice of $r$ and $k$, the inclusion $\Rhomboid{r}{k}\hookrightarrow \trhom{r}{k}$ is a homotopy equivalence.
Moreover, these inclusions commute with the inclusion maps in $\rhomonly$ and $\trhomonly$, hence define an objectwise homotopy equivalence.
\end{proof}

Combining the previous lemma with Theorem~\ref{thm:trhom_tcov} and Theorem~\ref{thm:covbifiltration} yields the following~result:
\begin{theorem}\label{thm:rhom_main}
The bifiltrations $\rhomonly$ and $\covonly$ are weakly equivalent.
\end{theorem}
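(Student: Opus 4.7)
The plan is short: the theorem is essentially a transitivity statement, so the proof will just chain together the three weak equivalences already established in this section. First I would invoke the Lemma immediately preceding the theorem, which gives an objectwise homotopy equivalence $\rhomonly \to \trhomonly$ arising from the deformation retractions that collapse the dangling portions of rhomboids. Next I would invoke Theorem~\ref{thm:trhom_tcov}, whose proof produces a zigzag
\[
\trhomonly \xleftarrow{\simeq} \deltaspace{\mathcal{U}}{\trhomonly} \xrightarrow{\simeq} \nerve(\mathcal{U}) \xrightarrow{\cong} \nerve(\mathcal{V}) \xleftarrow{\simeq} \deltaspace{\mathcal{V}}{\tcovonly} \xrightarrow{\simeq} \tcovonly
\]
of objectwise homotopy equivalences. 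Finally I would invoke Theorem~\ref{thm:covbifiltration}, which supplies a further zigzag connecting $\tcovonly$ to $\covonly$ (via the mapping-telescope bifiltration $X$ and the Persistent Nerve Theorem applied to the cover by the sets $\vorcov{r}{i}\times I$).

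Concatenating these three pieces yields a single finite zigzag of objectwise homotopy equivalences from $\rhomonly$ to $\covonly$, which is exactly the definition of weak equivalence given in Section~\ref{sec:background}. No additional argument is needed because the definition of weak equivalence for $P$-indexed filtrations is manifestly transitive: two zigzags can be glued end-to-end at their shared endpoint, and a single objectwise homotopy equivalence can be viewed as a zigzag of length one.

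There is no real obstacle here, since all the technical content (applying the Nerve Theorem in the persistent setting, constructing the mapping telescope, building the deformation retractions on dangling rhomboids, and verifying the combinatorial correspondence of \Cref{lem:correspondence}) was handled in the earlier proofs. The only thing to be mildly careful about is to make sure all three weak equivalences are stated over the same indexing poset $\R \times \N^{\mathrm{op}}$, so that the zigzags can indeed be concatenated; this is the case, as each construction is defined for every $(r,k) \in [0,\infty)\times\N$ and all morphisms respect the bifiltration structure. Hence the proof reduces to a single sentence citing the three results and appealing to transitivity.
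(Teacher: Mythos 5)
Your proposal matches the paper's proof exactly: the paper simply states that combining the preceding lemma (weak equivalence of $\rhomonly$ and $\trhomonly$) with Theorem~\ref{thm:trhom_tcov} and Theorem~\ref{thm:covbifiltration} yields the result, relying on transitivity of weak equivalence by concatenating zigzags. Your observation that all three are over the common index poset $\R\times\N^{\mathrm{op}}$ is a correct and sensible sanity check, though the paper leaves it implicit.
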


\begin{remark}[Size of the Rhomboid Bifiltration]\label{Rem:Size_Bound_Rhom}
In view of Proposition~\ref{Prop:RT_Size}, $\rhomonly$ has at most $2(n+1)^{d+1}=O(n^{d+1})$ cells.  
One can also bound the size of a truncated version of $\rhomonly$, defined analogously to the truncation of $\tcovonly$ considered in Proposition~\ref{Prop:Simplicial_Size_Bounds}.  Indeed, $\rhomonly$ is clearly smaller (in terms of number of cells) than $\trhomonly$, and by Remark~\ref{Rem:Injection}, $\trhomonly$ is at least as small as $\tcovonly$.  Moreover, this extends to truncations of these bifiltrations.  Thus, the size bound of Proposition~\ref{Prop:Simplicial_Size_Bounds} also holds for truncations of $\rhomonly$.
\end{remark}

%

\subsection{Computation}
In~\cite{EdOs20,osang2021multi}, a relatively simple algorithm is given for computing the rhomboid bifiltration.  (In fact,  \cite{EdOs20} explicitly considers only the computation of the rhomboid tiling and the radius $r_{\rho}$ of each rhomboid $\rho$; the rhomboid bifiltration is not mentioned.  But it is trivial to extend the algorithm to compute the depth $k_{\rho}$ of each rhomboid $\rho$, thus computing the rhomboid bifiltration.)

 We briefly outline the approach.  Given a $(d+1)$-dimensional rhomboid $\rho$, let $\sigma$ denote the intersection of $\rho$ with the hyperplane $k=k_\rho+1$.  We call $\sigma$ the \emph{generation-1} slice of $\rho$.  Note that $\sigma$ is a $d$-simplex. 
Given the combinatorial vertices of $\sigma$, we can easily recover $\rho$ \cite[Lemma 2]{EdOs20}. 

The algorithm of \cite{EdOs20} computes the rhomboid tiling by computing the \linebreak generation-1 slices of all $(d+1)$-dimensional rhomboids.  For each $k$ in increasing order, a weighted Delaunay triangulation $W_k$ is computed which triangulates the order-$k$ Delaunay mosaic and has the same vertex set.  Given the vertex set, $W_k$ can be computed via any algorithm for weighted Delaunay triangulation computation, e.g., via a $d+1$-dimensional convex hull computation \cite[Section 4.4.4]{boissonnat2018geometric}.  We explain below how the vertex set is computed.   

A simple combinatorial criterion \cite[Lemma 3]{EdOs20} tells us whether a $d$-simplex in $W_k$ is a generation-1 slice of a rhomboid.  Thus, one can efficiently identify \linebreak all generation-1 slices in the triangulation by iterating through all the $d$-simplices of $W_k$.  In this way, we identify all $(d+1)$-dimensional rhomboids $\rho$ with \linebreak $k_\rho= k-1$.  

It remains to explain how the vertex set of $W_k$ is computed.  The vertices of $W_1$ are just the sites $A$.  For $k\geq 2$, \cite[Lemma 3]{EdOs20} establishes that every vertex $v$ in $W_k$ appears in a rhomboid $\rho$ with $k_\rho \leq k-2$.  We thus discover $\rho$, and hence $v$, by the time we finish processing $W_{k-1}$.  

\paragraph{Complexity}
The complexity of this algorithm is discussed in~\cite[Section 4]{EdOs20}.  While an explicit runtime bound is not given, it is easy to extract naive bounds from the discussion; we now do so.  We distinguish between two contributions to the runtime:
\begin{enumerate}
\item computing $W_k$ at all levels $k$, given the vertices,
\item checking, for each $k$, whether each $d$-simplex in $W_k$ is a generation-1 slice and if so, storing the corresponding rhomboid and its faces. 
\end{enumerate}
 The latter requires $O(k)$ time per $d$-simplex in $W_k$.  Hence, since the rhomboid tiling has size $O(n^{d+1})$ (see Remark \ref{Rem:Size_Bound_Rhom}), the total time required over all \linebreak $d$-simplices is $O(n^{d+2})$.  

The complexity of computing the triangulations $W_k$ depends on a choice of algorithm for computing weighted Delaunay triangulations.   Some well-known algorithms have output-sensitive complexity bounds.  For example, in the case $d=3$, a weighted Delaunay triangulation of $p$ points can be computed by the algorithm of \cite{chan1997primal} in time $O((p+m)\log^2 m)$, where $m$ is the size of the output.  In our setting, the total size of all the $W_k$ is $O(n^{d+1})$ because the size of each $W_k$ differs from the size of the order-$k$ Delaunay mosaic by at most a constant factor.  Hence for $d=3$, which is arguably the case of primarily interest, the total time to compute all of the triangulations $W_k$ is $O(n^{4}\log^2 n)$.  Therefore, the total cost of computing the rhomboid bifiltration is $O(n^{5}+n^{4}\log^2 n)=O(n^{5}$).  

For arbitrary $d$, the approach of \cite[Section 4.4.4]{boissonnat2018geometric} computes the weighted Delaunay triangulation of $p$ points in $\R^d$  in time $O(p\log p+ p^{\lceil \frac{d}{2}\rceil})$.  In our setting, each vertex of each $W_k$ is a vertex of the rhomboid tiling, so there are a total of $O(n^{d+1})$ vertices among all $W_k$.  Thus, for $d\geq 3$ the time required to compute all $W_k$ is $O(n^{(d+1)\lceil \frac{d}{2}\rceil})$, and the runtime of the full algorithm satisfies the same asymptotic bound.  This bound is rather large, but it seems likely that it could be improved via a more careful analysis.

\paragraph{Implementation}
The above algorithm has been implemented in the software package \textsc{rhomboidtiling}\footnote{\url{https://github.com/geoo89/rhomboidtiling}}~\cite{EdOs20}.  The code computes the sliced and unsliced bifiltrations~$\trhomonly$ and $\rhomonly$ as well as their free implicit representations (FIREPs), i.e., chain complexes~\cite{lesnick2019computing}. \textsc{rhomboidtiling} is written in C++, using the \textsc{Cgal} library\footnote{CGAL, Computational Geometry Algorithms Library, \url{https://www.cgal.org}} for geometric primitives. The current version accepts only 2- and 3-dimensional inputs, but all steps readily generalize to higher dimensions; adding support for higher-dimensional inputs is a matter of \linebreak software design rather than algorithm development.  That said, handing higher-dimension inputs of practical size is likely to be computationally expensive.

 

\section{Experiments}\label{sec:experiments}

We performed experiments on point sets in $\R^2$ and $\R^3$.  We provide a brief summary here; for detailed results, see Appendix~\ref{app:experiments}.  We sampled points \linebreak uniformly at random from $[0,1]^2$ and $[0,1]^3$, from a disk, from an annulus, and from an annulus with noise added.  We computed the rhomboid bifiltrations  $\rhomonly^{\leq\kmax}$ and  $\trhomonly^{\leq\kmax}$.  We then used \textsc{mpfree}\footnote{\url{https://bitbucket.org/mkerber/mpfree}} to compute minimal presentations of 2-parameter persistent homology of our bifiltrations.
 
In one set of experiments, we found that $\rhomonly^{\leq\kmax}$  is up to 47\% smaller than $\trhomonly^{\leq\kmax}$, and can be computed more than 20\% faster.  The experi\-ments suggest that the relative performance of $\rhomonly^{\leq\kmax}$ improves with \linebreak increasing $\kmax$.

\begin{figure}[h]
 	\centering
	\includegraphics[width=0.305\textwidth]{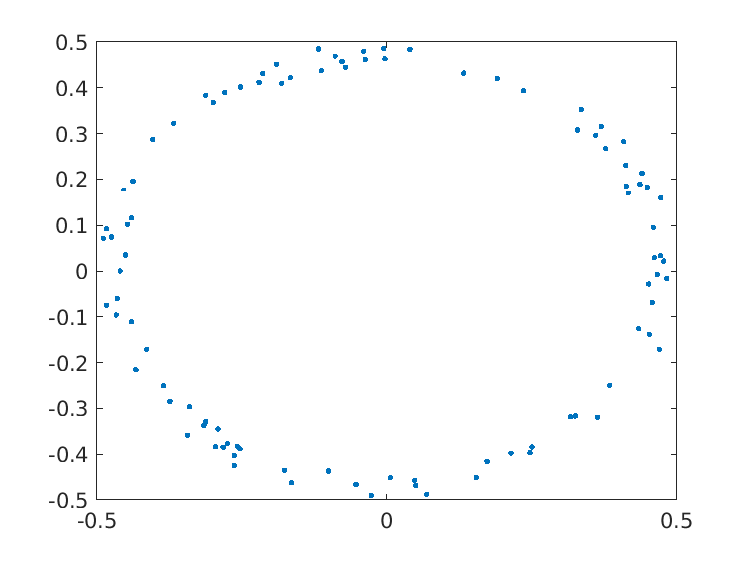} 
	\includegraphics[width=0.18\textwidth]{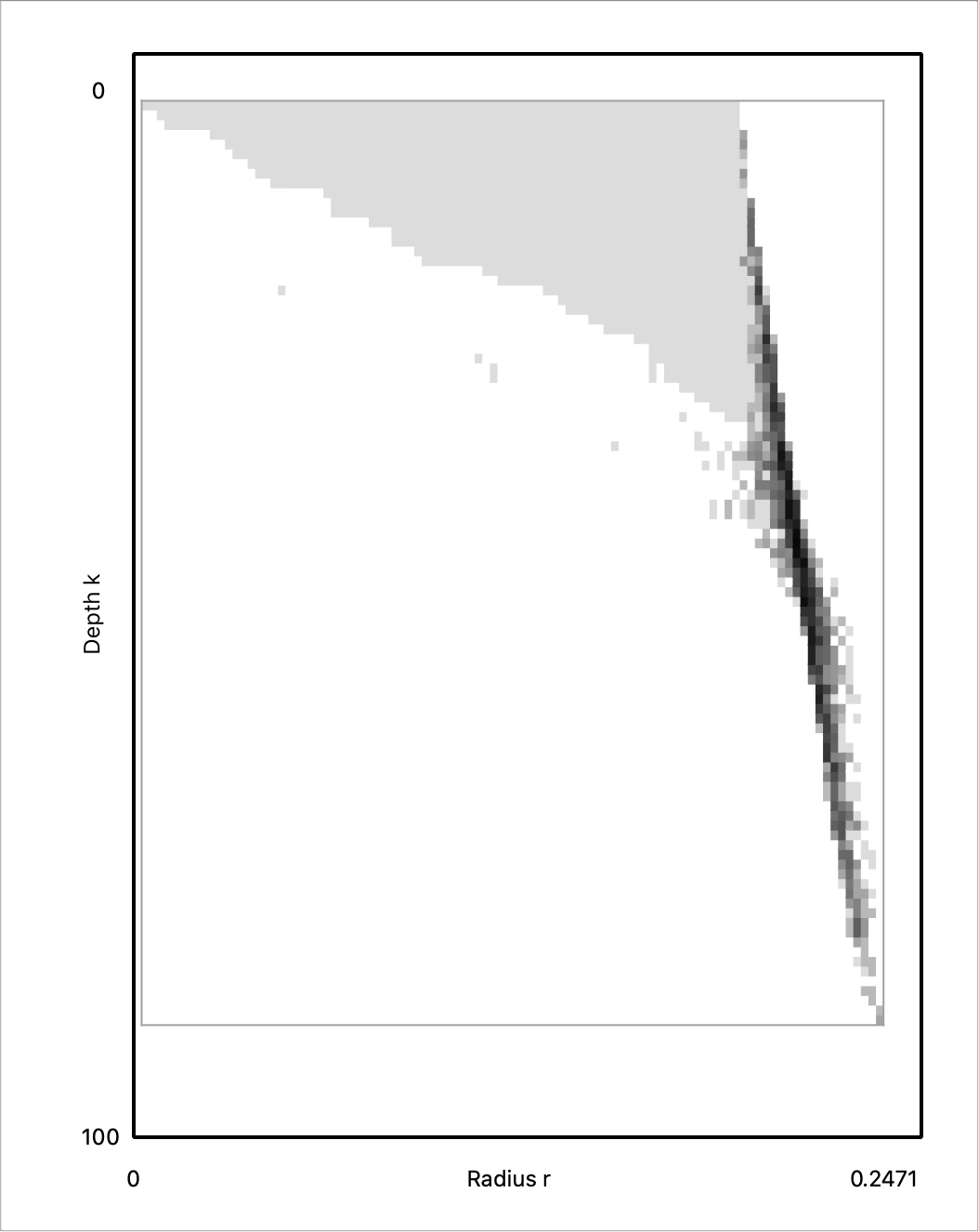} 
	\includegraphics[width=0.305\textwidth]{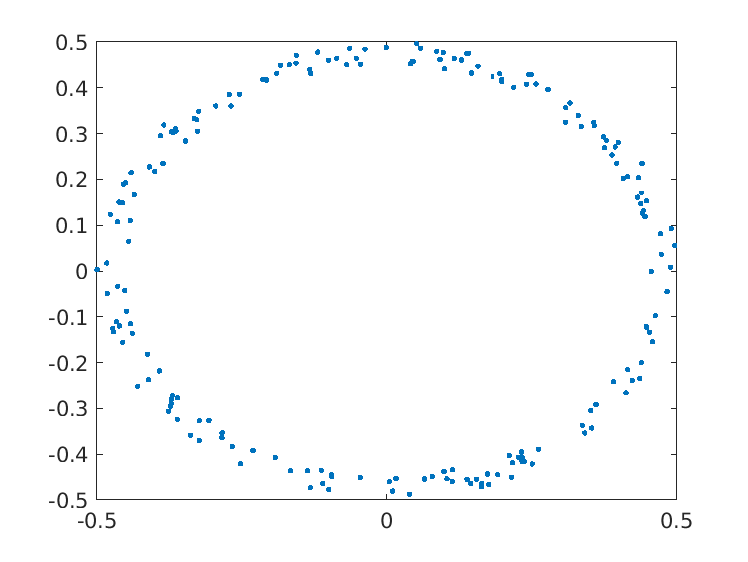} 
	\includegraphics[width=0.18\textwidth]{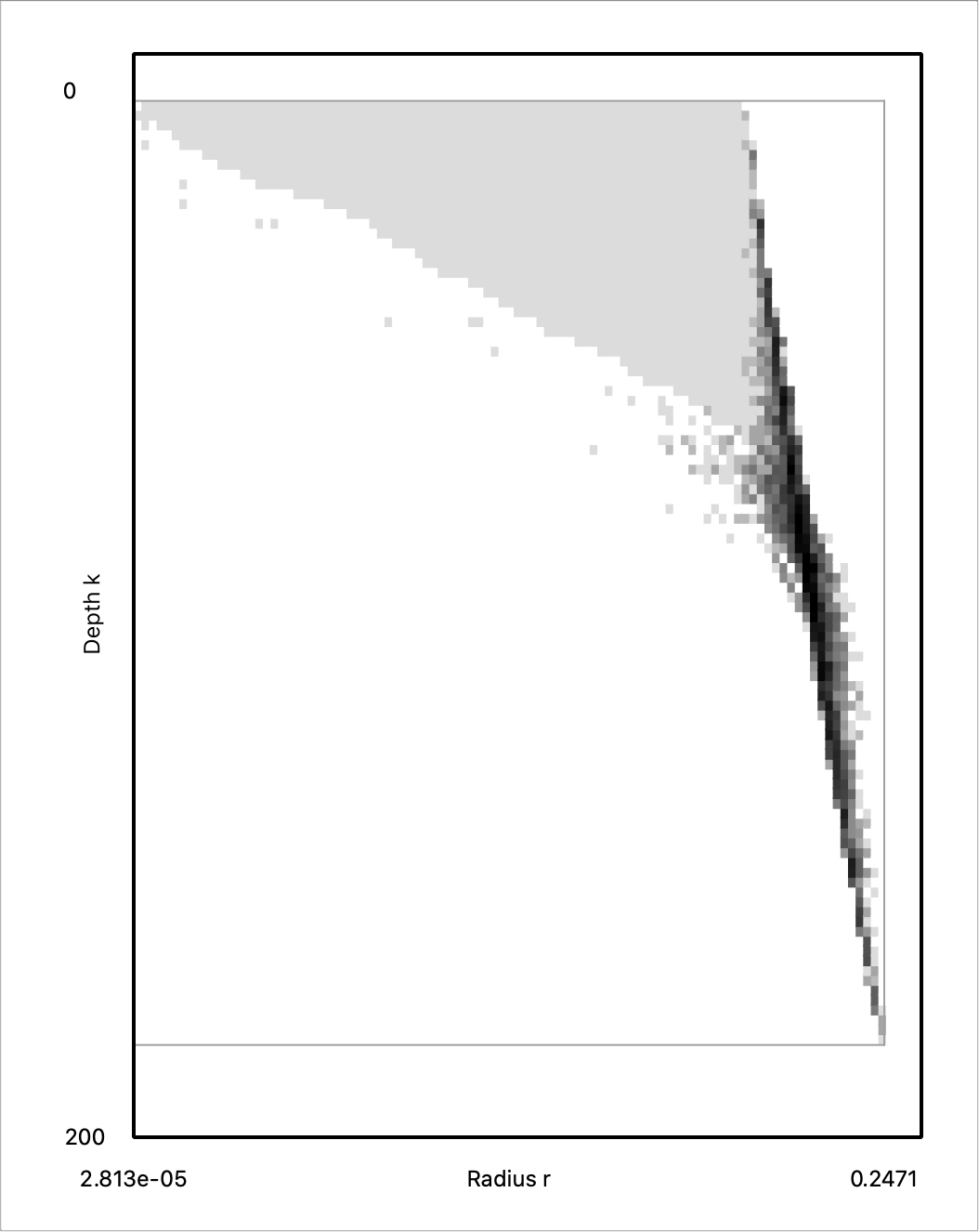}  
	\\ 
	\includegraphics[width=0.305\textwidth]{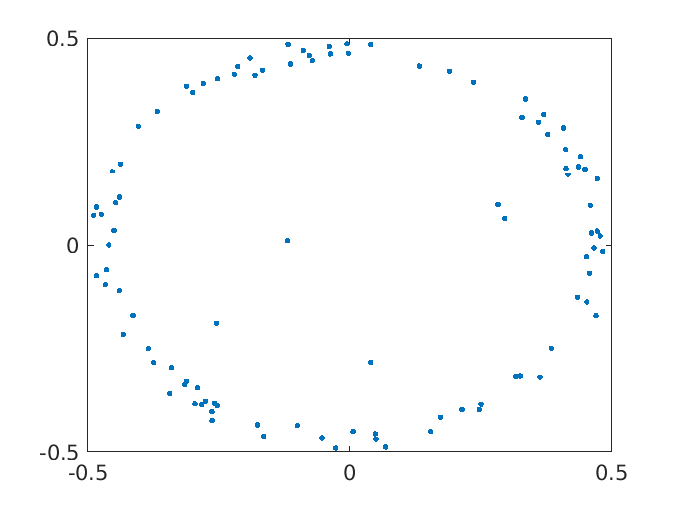} 
	\includegraphics[width=0.18\textwidth]{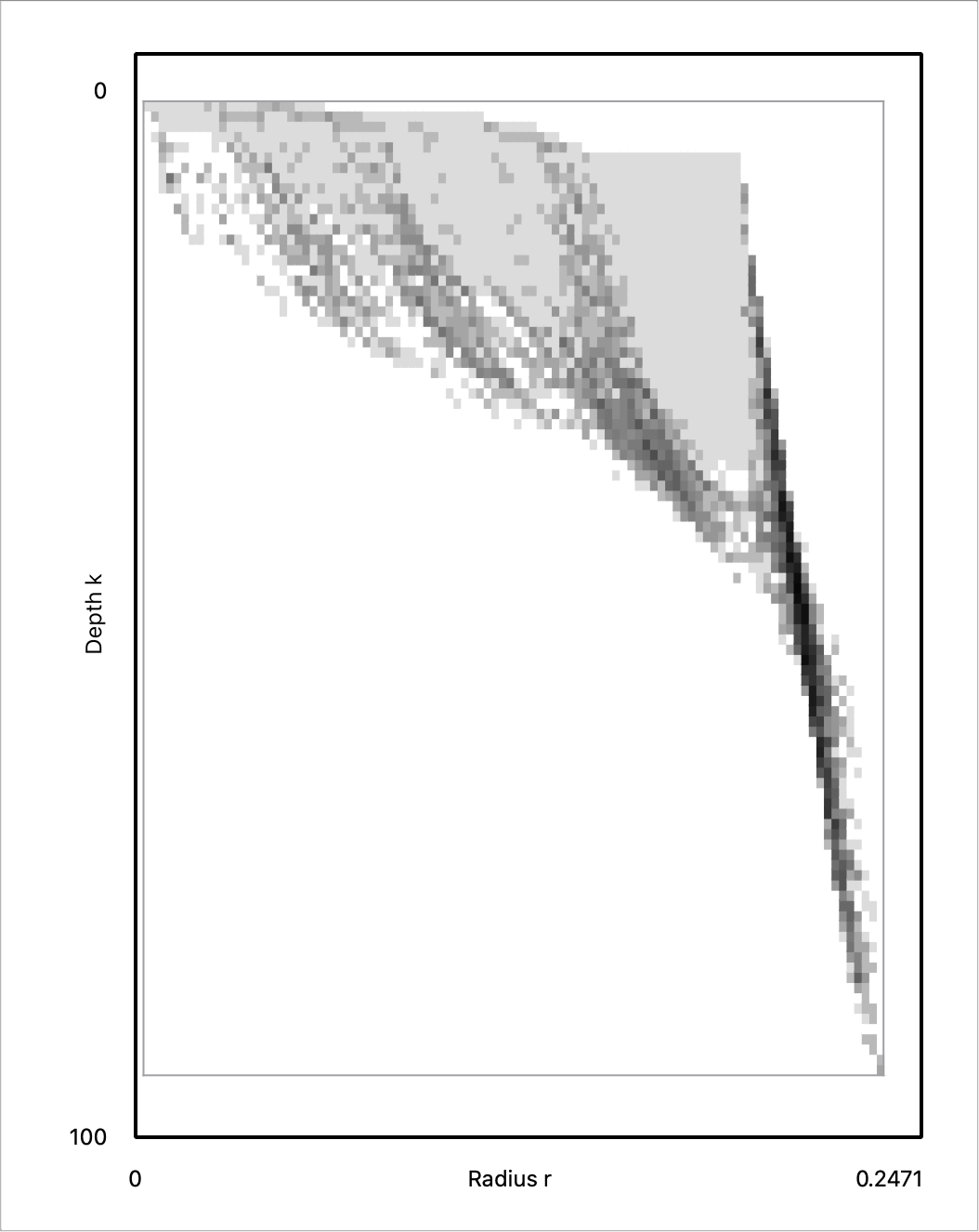} 
	\includegraphics[width=0.305\textwidth]{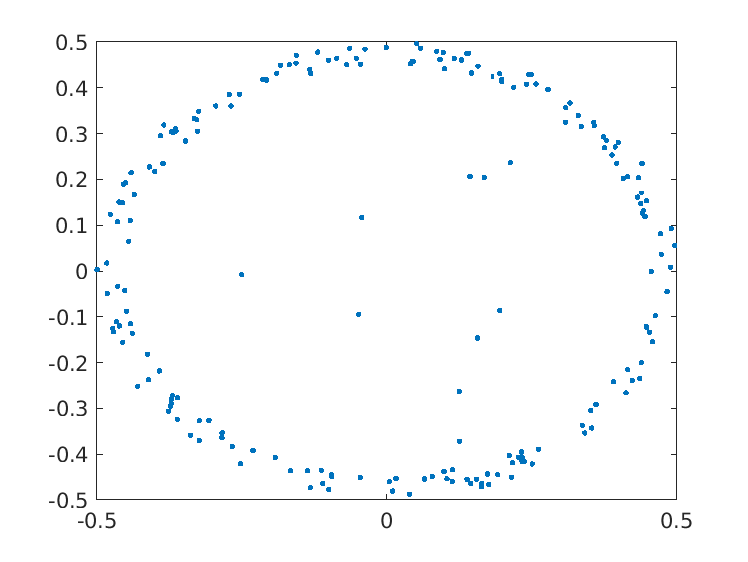} 
	\includegraphics[width=0.18\textwidth]{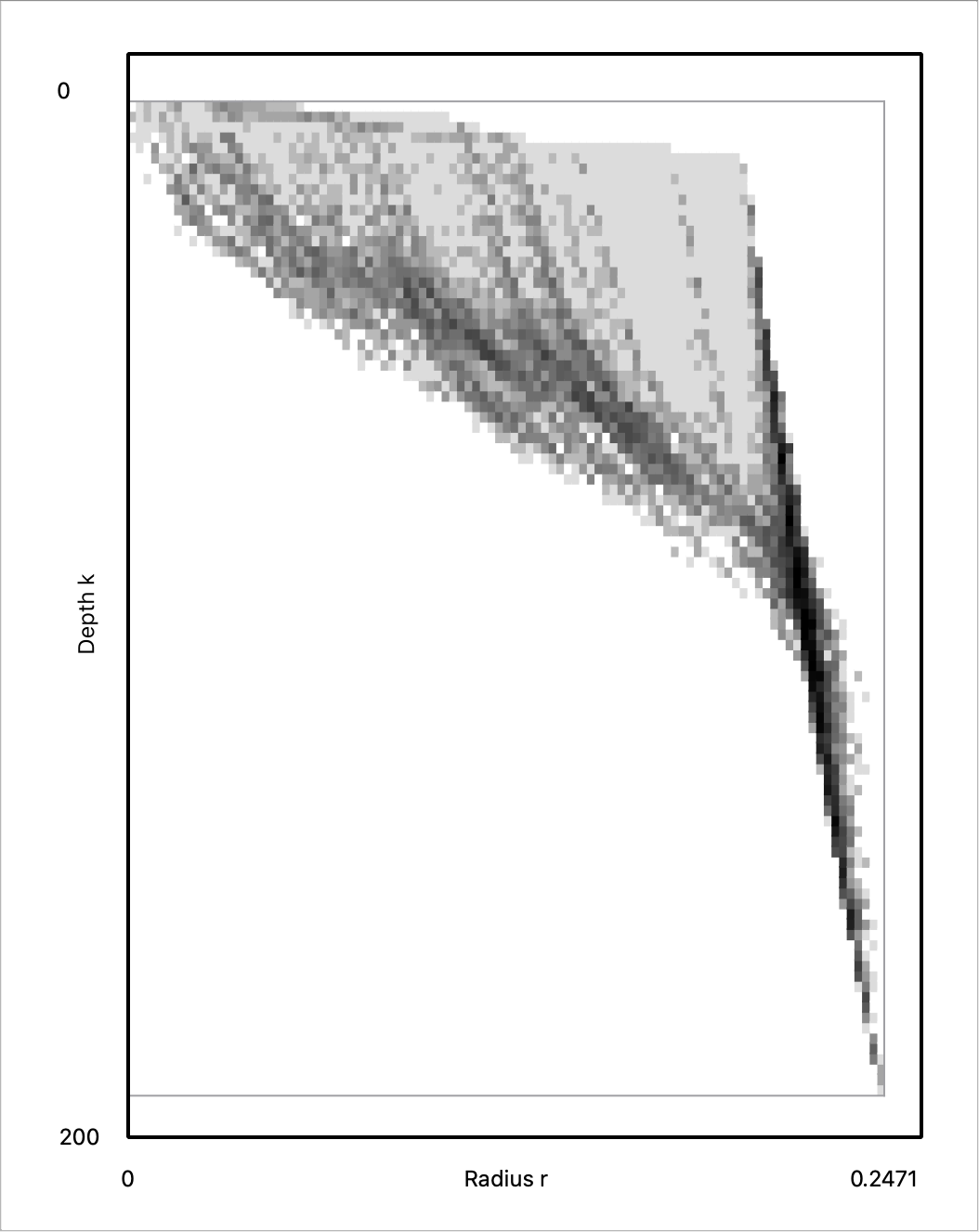} 
	\\ 
	\includegraphics[width=0.305\textwidth]{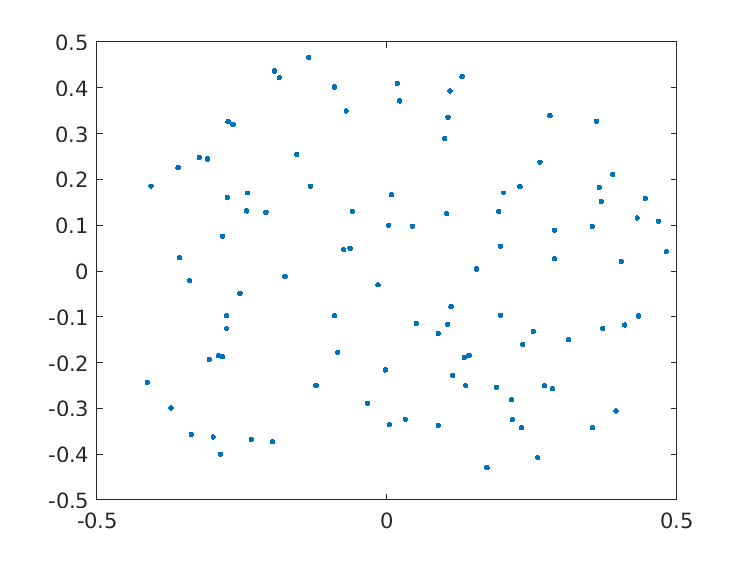} 
	\includegraphics[width=0.18\textwidth]{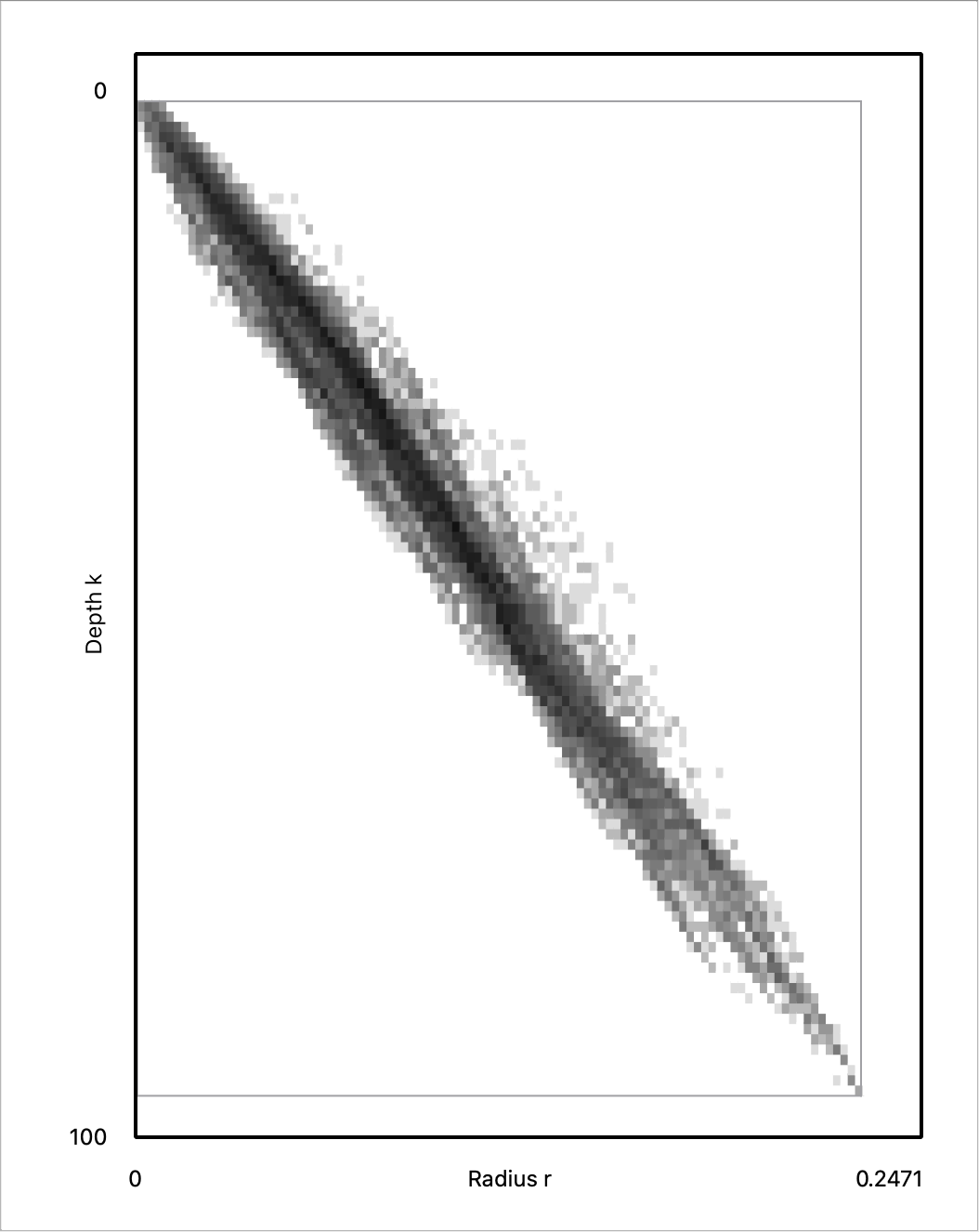} 
	\includegraphics[width=0.305\textwidth]{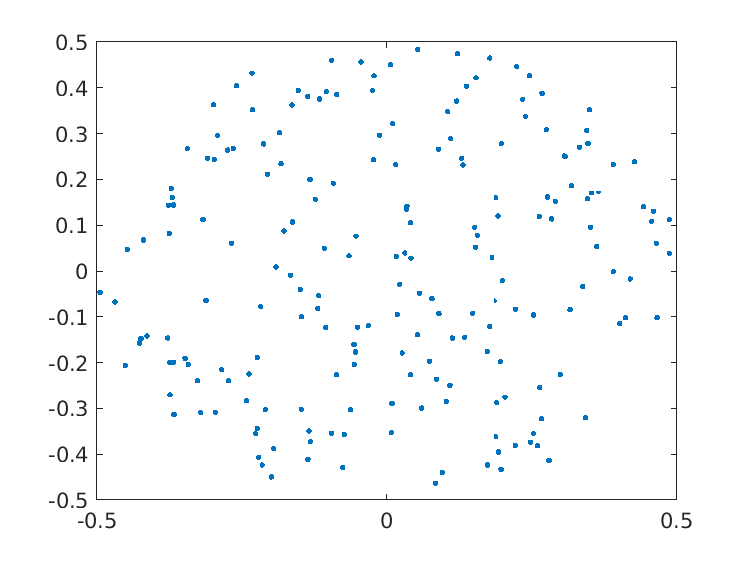} 
	\includegraphics[width=0.18\textwidth]{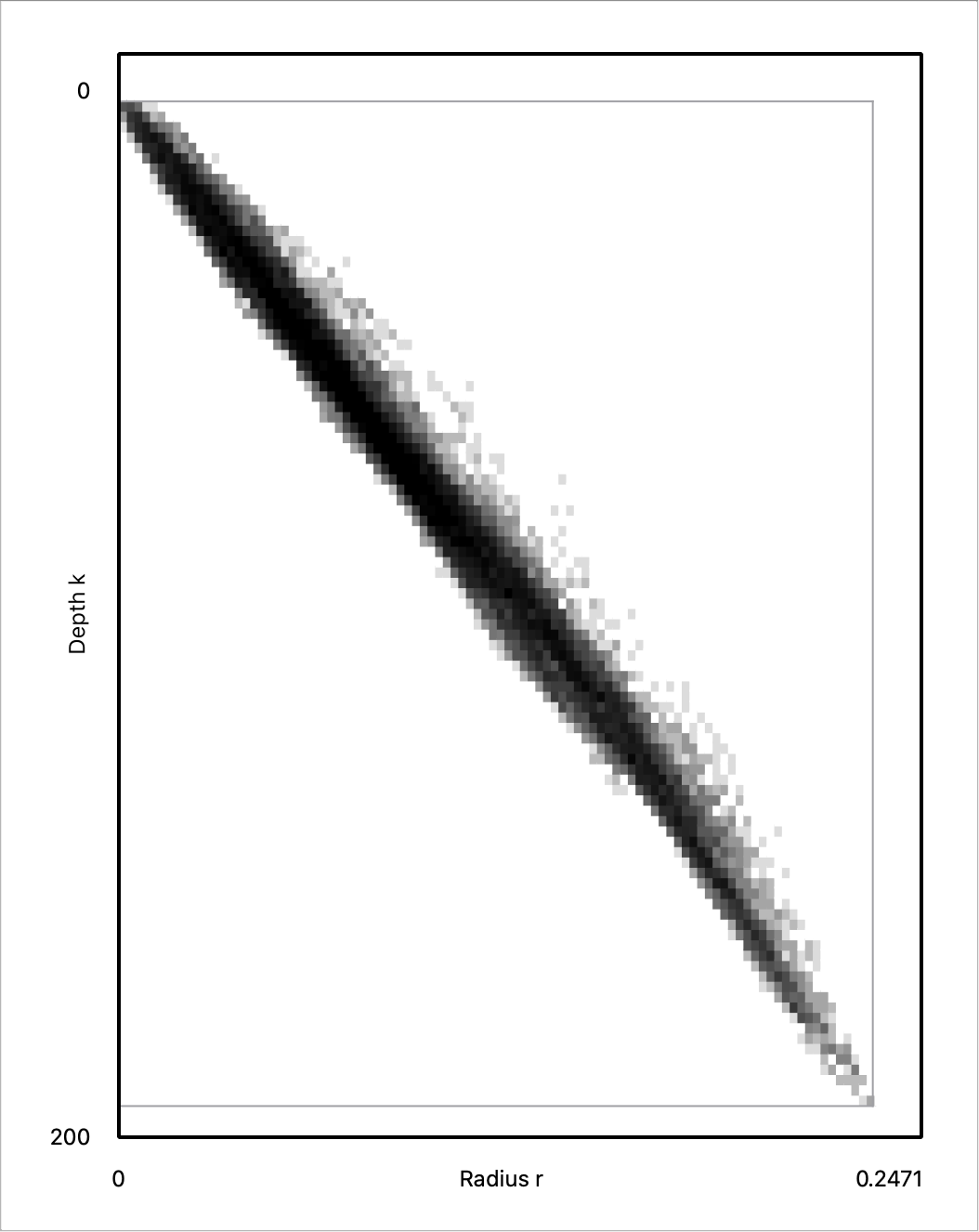} 

	\caption{An illustration of the first Hilbert function of the multicover bifiltration, using grayscale shading.  The instances are samples of an annulus (top), a noisy annulus (middle), and a disk (bottom). The sample size is 100 in the left column, and 200 in the right column.  Darkness of the shading is proportional to the value of the Hilbert function, up to some maximum value, above which the shading is taken to be black; the lightest non-white shade of gray corresponds to a Hilbert function value of 1.} 
	\label{fig:noisy}
\end{figure}

We investigated the size of $\rhomonly^{\leq\kmax}$, 
varying the sample size and the threshold $K$.
For $d=2$, our experiments show a clear subquadratic growth of the size of $\rhomonly^{\leq\kmax}$ and its FIREP with respect to increasing $\kmax$. For $d=3$, the growth is clearly subcubic. These observations also extend to time complexity.
Letting the number of points increase, the size of $\rhomonly^{\leq\kmax}$ and its FIREP shows roughly linear growth for both space dimensions, with a slight superlinear tendency. Again, we observed the same behavior for the computation time.

We conclude this section with a data visualization enabled by the ideas of this paper: For $i\geq 0$, the \emph{$i$-th Hilbert function} assigns to each parameter $(r,k)\in \R\times \N$ the rank of $i$-th homology module of $\cov{r}{k}$ (with coefficients in some fixed field). The Hilbert functions are well known to be unstable invariants.  Nevertheless, their visualization can give us a feel for how the Lipschitz stability property of the multicover bifiltration established in~\cite{blumberg2020stability} manifests itself in random data.  Figure~\ref{fig:noisy} shows a few examples, plotted \linebreak using~\textsc{rivet}\footnote{\url{https://github.com/rivetTDA/}}.


\section{Conclusion}\label{sec:conclusion}
We have introduced a simplicial model for the multicover bifiltration, as well as a polyhedral model based on the rhomboid tiling of \cite{Edelsbrunner2018}.  For a data set of size $n$ in $\R^d$ with $d$ constant, the size of both constructions is $O(n^{d+1})$.  The size can be controlled by thresholding the parameter $k$ of the multicover bifiltration.  An algorithm of \cite{EdOs20} computes the rhomboid bifiltration, and an implementation is available.  In our experimental results, this approach scales well enough to suggest that practical applications could soon be within reach.  A natural next step is to begin exploring the use of the multicover bifiltration on real world data.  

To obtain our combinatorial models of the multicover bifiltration, we begin with a zigzag of filtrations, and then straighten it out by taking unions of prefixes.  Notably, one could in principle compute the persistent homology modules of the multicover bifiltration without straightening out the zigzag, by inverting the isomorphisms on homology induced by the~inclusions $\del{r}{k}\hookrightarrow \tdel{r}{k}$.   It seems plausible that this approach could be computationally~useful.

We are curious to learn which indecomposables typically arise in the \linebreak persistent homology modules of multicover bifiltration, and our approach could be used in conjunction with existing algorithms~\cite{dey2019generalized,holt1998meataxe} to study this.  It would also be interesting to investigate whether there is an interplay between the geometry of a space and the multicover bifiltration of a noisy sample of this space; we wonder if invariants of the bifiltration encode additional information about geometric properties, such as the reach or differentiability. 

Our experiments show a significant increase in the size of our models of multicover bifiltration for increasing $\kmax$.   This suggests the need for refinements to our algorithmic approach in order to handle large values of $\kmax$.  
Aside from the truncations considered in this paper, there are a couple of promising ways forward: One could construct a coarsened bifiltration where some values of $k$ are skipped.  Alternatively, one could make use of the inductive nature of our constructions: for the step from $k$ to $k+1$, one does not need information about the bifiltrations at indices $j<k$. Therefore, one could provide the bifiltration as an output stream without storing it completely in memory. Subsequent algorithmic steps would then have to be implemented as streaming algorithms as well.  

\paragraph{Acknowledgments.} The authors thank the anonymous reviewers for many helpful comments and suggestions, which led to substantial improvements of the paper.  The first two authors were supported by the Austrian Science Fund (FWF) grant number P 29984-N35 and W1230. The first author was partly supported by an Austrian Marshall Plan Scholarship, and by the Brummer \& Partners MathDataLab.
A conference version of this paper was presented at the 37th International Symposium on Computational Geometry (SoCG 2021).

\bibliography{kfoldcov_arxiv.bib}


\appendix

\section{Details on experiments}
\label{app:experiments}

\subsection{Implementation}

Let us mention an important technicality in our pipeline for computing minimal presentations: In order to limit the size of the minimal presentations, we ``snap'' the radius values of all generators and relations onto a set of 100 evenly spaced points in $\R$.  The values of the parameter $k$ are left unchanged.  This snapping is done only after the minimal presentation is computed.  The snapping process in fact can make a minimal presentation non-minimal, so after snapping, we re-minimize the presentation.  All reported results below are for minimal presentations computed using this pipeline.  The Hilbert functions shown in Figure~\ref{fig:noisy} were also computed from such ``snapped" presentations.

\subsection{Experimental output} 

We present the concrete outcome of some of our experiments.
All results are averaged
over $5$ runs with independently generated data sets.  The sizes reflect  the number of elements of the corresponding set, and the times were measured in seconds. 

We give a brief overview of the experiments, referring to the tables for further details.  We were curious about the practical improvements from $\trhomonly$ to $\rhomonly$. We documented these for a few values of $n$ and $\kmax$ in the plane. The sizes of their truncated versions both grow linearly in the number of points. See Table~\ref{tab:srhombrhomb} for more refined results. In further experiments, we only used $\rhomonly$.

Table~\ref{tab:sizesKfix} shows the behavior of fixed $\kmax$ and an increasing number of uniformly sampled points. Conversely, Table~\ref{tab:sizesNfix} documents the behavior of $\rhomonly$ in an experiment with a fixed number of points and increasing $\kmax$. Both in dimension $2$ and $3$, we investigated the size of the bifiltration, the size of the FIREPs, and the size of minimal presentations thereof. We also kept track of the time needed for the computations.

Finally, we wondered how the measurements change when data sets are sampled from a particular shape. As an example, we sampled points from an annulus with random but bounded perturbations and added uniform background noise to it. Letting both the range of the perturbations and the portion of the background noise vary, we tracked the size of the minimal presentations in Table~\ref{tab:noisycircle}.

\begin{table}[hbt!]
\centering
\scalebox{.85}{
\begin{tabular}{|c|c|c|c|c|c|c|c|c|c|c|c|}
\hline
 \multicolumn{2}{|c|}{$d=2$} & \multicolumn{3}{c}{FIREP ($\trhomonly^{\leq\kmax}$)} & \multicolumn{3}{|c|}{FIREP ($\rhomonly^{\leq\kmax}$)} &  \multicolumn{2}{c|}{} \\
\hline
$n$ & $\kmax$ 
& size & time & $\frac{\textnormal{size}}{n\cdot\kmax^2}$
& size  &  time & $\frac{\textnormal{size}}{n\cdot\kmax^2}$ & 
$\frac{\textnormal{size (u)}}{\textnormal{size (s)}}$ & $\frac{\textnormal{time (u)}}{\textnormal{time (s)}}$ \\
\hline
10,000 & 4 
&1.916 M & 25.67 & 12.0
& 1.197 M & 19.49 &  7.48
 & 62\% &  76\% \\
20,000 & 4 
&3.836 M & 50.79 & 12.0
& 2.397 M & 40.11 & 7.49
 & 62\% & 79\% \\
40,000 & 4 
& 7.676 M & 107.44 & 12.0
& 4.797 M & 84.21 & 7.49
 & 62\% & 78\% \\
80,000 & 4 
&15.355 M & 228.29 & 12.0
& 8.597 M & 179.45 & 6.71
 & 56\% & 78\% \\
10,000 & 8 
& 8.249 M & 106.90 & 12.9
& 4.344 M & 77.30&  6.79
 & 53\% & 72\% \\
20,000 & 8
& 16.528 M & 231.75 & 12.9
& 8.702 M & 166.55& 6.80
 & 53\% & 72\% \\
40,000 & 8 
&17.709 M & 475.51 & 6.95
& 15.265 M & 346.67 & 5.97
 & 86\% & 73 \% \\
80,000 & 8 
&
17.718 M & 1,010.44 & 3.46
& 13.835 M & 746.48& 2.70
 & 78\% & 74\% \\
\hline
\end{tabular}}
\caption{We compare the sizes of FIREPs of $\trhomonly^{\leq\kmax}$ and $\rhomonly^{\leq\kmax}$. We considered $n$ uniformly sampled points in the unit square. The FIREPs of the unsliced bifiltration show 63\% of the size of FIREPs of the sliced version for $d=2$, and 54\% for  $d=3$. That advantage seems to get smaller for a sufficiently large number of points.  Computing the unsliced version has revealed to be more than 20\% faster for $K=4$ and to be more than 25\% faster for $K=8$ in this experiment. A generalization of these findings would describe the computational value of $\rhomonly$ more accurately. 
}
\label{tab:srhombrhomb}
\end{table}

\begin{table}[hbt!]
\centering
\scalebox{.85}{
\begin{tabular}{|c|c|c|c|c|c|c|c|c|c|c|}
\hline
 \multicolumn{2}{|c|}{} & \multicolumn{2}{|c|}{$\rhomonly^{\leq\kmax}$} & \multicolumn{4}{c|}{FIREP} & \multicolumn{2}{c|}{snapped minpres} \\
\hline
$d$ & $n$ & size &  $\frac{\textnormal{size}}{n\cdot\kmax^3}$  &  size &  $\frac{\textnormal{size}}{n\cdot\kmax^2}$  &  time & $\frac{10^4\cdot\textnormal{time}}{ n\cdot\kmax^3}$ & size & time \\
\hline
2 & 1,250 & 0.192 M & 9.60 & 0.148 M & 7.40 & 1.95 & 3.90 & 0.0182 M & 1.58  \\
2 & 2,500 & 0.387 M & 9.68 & 0.297 M & 7.43 & 6.52 & 1.60 & 0.0373 M & 4.56 \\
2 & 5,000 & 0.777 M & 9.71 & 0.598 M & 7.48 & 9.30 & 1.16 & 0.0749 M & 6.61 \\
2 & 10,000 & 1.56 M & 9.75 & 12.0 M & 7.50 & 20.13 & 1.26 & 0.152 M & 15.36 \\
2 & 20,000 & 3.12 M & 9.75 & 2.40 M & 7.5 & 40.36 & 1.26 & 0.281 M & 32.39  \\
\hline
$d$ & $n$ & size &  $\frac{\textnormal{size}}{n\cdot\kmax^3}$  &  size &  $\frac{\textnormal{size}}{n\cdot\kmax^3}$  &  time & $\frac{10^4\cdot\textnormal{time}}{5\cdot n\cdot\kmax^3}$ & size & time \\
\hline
3 & 1,250 & 2.28 M &  28.50 & 1.42 M & 17.75 & 45.39 & 1.13 & 0.0806 M & 10.39 \\
3 & 2,500 & 4.68 M &  29.25 & 2.92 M & 18.25 & 95.76 & 1.20 & 0.176 M & 24.34 \\
3 & 5,000 & 9.49 M &  29.66 & 5.91 M & 18.47 & 202.92 & 1.27 & 0.360 M & 58.35 \\
3 & 10,000 & 19.2 M & 30.00 & 12.0 M & 18.75 & 431.58 & 1.35 & 0.739 M & 142.12  \\
3 & 20,000 & 38.7 M & 30.23 & 24.1 M & 18.83 & 904.42 & 1.41 & 1.52 M & 378.25 \\
\hline
\end{tabular}}
\caption{Results on $n$ uniformly sampled points in the unit square and unit cube, using $\kmax=4$ as maximal value for $k$. The size of the bifiltration and the FIREP grows slightly superlinear in the documented scale window. The computation time seems to be of the same complexity.}
\label{tab:sizesKfix}
\end{table}

\begin{table}[hbt!]
\centering
\scalebox{.82}{
\begin{tabular}{|c|c|c|c|c|c|c|c|c|c|c|c|c|}
\hline
 \multicolumn{2}{|c|}{} & \multicolumn{2}{c|}{$\rhomonly^{\leq\kmax}$} & \multicolumn{4}{c}{FIREP} & \multicolumn{3}{|c|}{snapped minpres} \\
\hline
$d$ & $\kmax$ & size & $\frac{\textnormal{size}}{n\cdot\kmax^2}$ &  size &  $\frac{\textnormal{size}}{n\cdot\kmax^2}$ &  time & $\frac{10^5\cdot\textnormal{time}}{n\cdot\kmax^2}$ & size & time & $\frac{\textnormal{size}}{\textnormal{FIREP size}}$ \\
\hline
2 & 2 & 0.0216  M & 10.8 & 0.0167 M & 8.35 & 0.26 &  1.30  & 1.67 K & 0.29 & 10.0\% \\
2 & 4 & 0.0758 M  & 9.48 & 0.0583 M & 7.29 & 0.76 &  0.95  & 7.08 K & 0.74 &  12.1\% \\
2 & 8 &  0.272 M & 8.50 & 0.207 M & 6.47 & 3.14 &  0.98  & 23.6 K & 2.13 &  11.4\%\\
2 & 16 & 0.997 M  & 7.79 & 0.755 M & 5.90 &  12.53  & 0.98  & 74.7 K & 7.47 &  9.89\% \\
\hline
$d$ & $\kmax$ & size &  $\frac{\textnormal{size}}{n\cdot\kmax^3}$  &  size &  $\frac{\textnormal{size}}{n\cdot\kmax^3}$  &  time & $\frac{10^5\cdot\textnormal{time}}{5\cdot n\cdot\kmax^3}$ & size & time & $\frac{\textnormal{size}}{\textnormal{FIREP size}}$ \\
\hline
3 & 2 &  0.144 M  & 36.0 &  0.0860 M  & 21.5 &  2.30 &  1.15  & 5.70 K & 0.86 &  6.63\% \\
3 & 4 &  0.863 M &  27.0 & 0.539 M& 16.8 & 16.1 & 1.00 & 30.8 K & 3.74 & 5.71\% \\
3 & 8 &  5.33 M & 20.8 & 3.36 M  & 13.1 &   122  & 0.95 & 146 K & 23.7 & 4.34\% \\
3 & 16 &  34.0 M & 16.6 & 21.4 M & 10.4 &  1,006  & 0.98 & 627 K & 192 & 2.93\% \\
\hline
\end{tabular}}
\caption{Results on 500 uniformly sampled points in the unit square and unit cube. The sizes of the bifiltration and its FIREP grow clearly subquadratic for $d=2$, and clearly subcubic for $d=3$. The running times seem close to being quadratic and cubic in dimensions 2 and 3, respectively. We also measured the relation of the sizes of the snapped minimal presentations and FIREPs.}
\label{tab:sizesNfix}
\end{table}

\begin{table}[hbt!]
\centering
\scalebox{.88}{
\begin{tabular}{|c|c|c|c|}
\hline
p & $err$ & snapped minpres size & relative size \\
\hline
1 & 0.01 &  2,015  
& 0.38 \% \\
1 & 0.04 &  1,691 
&  0.32 \% \\
1 & 0.08 &  1,526
& 0.29 \% \\
1 & 0.12 &   5,497 
& 1.0 \% \\
1 & 0.14 &    16,183 
& 3.1 \% \\
1 & 0.16 &  29,988 
& 5.7 \% \\
4 & 0.01 &  14,389  
& 2.7 \% \\
4 & 0.04 &   12,852 
& 2.4 \% \\
4 & 0.08 &  15,198 
& 2.9 \% \\
4 & 0.12 &  18,457  
& 3.5 \% \\
4 & 0.16 &  43,917  
& 8.3 \% \\
16 & 0.01 &  70,901 
& 13 \% \\
16 & 0.04 &   84,142 
& 16 \% \\
16 & 0.08 &  99,782 
& 19\% \\
16 & 0.16 &  146,170 
& 28\% \\
64 & 0.01 &   344,847 
& 65 \% \\
64 & 0.04 &  365,308 
& 69 \% \\
64 & 0.08 &  416,522 
& 79 \% \\
64 & 0.16 &   427,697 
& 80 \% \\
\hline
100 & - &  529,128  
& 100 \% \\
\hline
\end{tabular}}
\caption{We sampled points of an annulus around a circle of radius $0.25$. In total, we have 10,000 points whereas $p\%$ of these points are uniform noise in the surrounding box $[0,1]^2$. The other points are sampled with a random perturbation per coordinate bounded by a number $err$. We considered $\kmax=8$ as maximal value for $k$. The size of the snapped minimal presentation increases when adding more uniform noise. It may increase more drastically within a certain range of $p$, i.e., for $p\in\{1,4\}$, starting at about at 0.12. We also observed a considerable variance of the individual results in such areas. In particular, the size of the snapped minimal presentations is neither a linear, nor a sub- or superlinear process. We regard this process mostly as a property of the snapping technique. Furthermore, when $p$ is not too big, the perturbations around each sampled point can be quite high, i.e., for $p\leq 4$ and $err=0.16$, the snapped minimal presentations are still of relative size $5.7\%$ and $8.3\%$, respectively. Note that the samples only stay inside the surrounding box  $[0,1]^2$ if $err\leq 0.25$.
}
\label{tab:noisycircle}
\end{table}

\end{document}